\pgfplotsset{compat=newest}
\pgfplotsset{
	every y tick label/.append style={font=\scriptsize},
	every x tick label/.append style={font=\scriptsize},
	every axis label/.append style={font=\scriptsize}
}
\newlength\figureheight
\newlength\figurewidth
\def\BState{\State\hskip-\ALG@thistlm}
\definecolor{c1}{rgb}{0,0,0}%
\definecolor{c2}{rgb}{0,0,0}%
\definecolor{c3}{rgb}{0,0,0}%
\definecolor{c4}{rgb}{0,0,0}%
\definecolor{c5}{rgb}{0,0,0}%
\definecolor{c6}{rgb}{0,0,0}%
\definecolor{c7}{rgb}{0,0,0}%
\definecolor{c11}{rgb}{0,0,0}%
\definecolor{c22}{rgb}{0,0,0}%
\definecolor{c33}{rgb}{0,0,0}%
\definecolor{c44}{rgb}{0,0,0}%
\definecolor{c55}{rgb}{0,0,0}%
\definecolor{c66}{rgb}{0,0,0}%
\definecolor{c77}{rgb}{0,0,0}%
\definecolor{c77}{rgb}{0,0,0}%
\definecolor{c88}{rgb}{0,0,0}%
\definecolor{c99}{rgb}{0,0,0}%
\definecolor{c111}{rgb}{0,0,0}%
\newtheorem{theorem}{Theorem}
\newcommand{\ma}[1]{\mbox{\boldmath$#1$}}
\newcommand{\ten}[1]{\ma{\mathcal #1}}
\newcommand{\compl}{\mathbb{C}}
\newcommand{\real}{\mathbb{R}}
\DeclareMathOperator*{\argmin}{argmin}
\def\footnoterule{\relax%
	\kern-5pt
	\hbox to \columnwidth{\hfill\vrule width 1.0\columnwidth height 0.4pt\hfill}
	\kern4.6pt}
\begin{document}
%\IEEEoverridecommandlockouts
%\IEEEpubid{Hello world}
\title{First-Order Perturbation Analysis of the SECSI Framework for the Approximate CP Decomposition of 3-D Noise-Corrupted  Low-Rank Tensors}

%\author{\textit{Sher Ali Cheema\IEEEauthorrefmark{1}, Emilio Rafael Balda\IEEEauthorrefmark{1},Amir Weiss\IEEEauthorrefmark{2},  Arie Yeredor\IEEEauthorrefmark{2}}, and Martin Haardt\IEEEauthorrefmark{1}\\
%{\IEEEauthorrefmark{1}Communications Research Laboratory, Ilmenau University of Technology, Germany}\\
%{\IEEEauthorrefmark{2}School of Electrical Engineering, Tel-Aviv University, Israel}\\
% Email: \IEEEauthorrefmark{1}\{emilio.balda, martin.haardt\}@tu-ilmenau.de, \IEEEauthorrefmark{2}\{arie@eng.tau.ac.il\} \\
% %\textit{(Invited by Nicholas D. Sidiropoulos for the special session on "Tensor Signal Proccessing")}
% \thanks{ \copyrighttext}}
\author{Sher Ali Cheema,~\IEEEmembership{Student~Member,~IEEE,} Emilio Rafael Balda, Yao Cheng,~\IEEEmembership{Member,~IEEE,} \newline Martin Haardt,~\IEEEmembership{Senior~Member,~IEEE},   Amir Weiss,~\IEEEmembership{Student~Member,~IEEE,} and \newline Arie Yeredor,~\IEEEmembership{Senior~Member,~IEEE}
\thanks{S. A. Cheema, E. R. Balda, Y. Cheng, and M.Haardt are with the Communication Research Laboratory, Ilmenau University of Technology, Germany
      (e-mail: sher-ali.cheema@tu-ilmenau.de, emilio.balda@tu-ilmenau.de, y.cheng@tu-ilmenau.de, and martin.haardt@tu-ilmenau.de) \\
      A. Weiss and A. Yeredor are with the School of Electrical Engineering, Tel-Aviv University, Israel
      (e-mail: amirwei2@mail.tau.ac.il, arie@eng.tau.ac.il)}}
%\author{Sher Ali Cheema, Emilio Rafael Balda, Yao Cheng, Martin Haardt,  Amir Weiss, and Arie Yeredor}
%\author{\textit{Sher Ali Cheema, Emilio Rafael Balda, Martin Haardt, Amir Weiss, and  Arie Yeredor }}

\maketitle
%\begin{multicols}{2}  %Before starting of \begin{abstract}

\begin{abstract}
The Semi-Algebraic framework for the approximate  Canonical Polyadic (CP) decomposition via SImultaneaous matrix diagonalization (SECSI) is an efficient tool for the computation of the CP decomposition. The SECSI framework \textcolor{c1}{reformulates} the CP decomposition into a set of joint eigenvalue decomposition (JEVD) problems. Solving all JEVDs, we obtain multiple estimates of the factor matrices and the best estimate is chosen in a subsequent step by \textcolor{c55}{using an} exhaustive search or some heuristic \textcolor{c55}{strategy that reduces the computational complexity}. Moreover, the SECSI framework retains the option of choosing the number of JEVDs to be solved, thus providing an adjustable complexity-accuracy trade-off. In this work, we provide an analytical performance analysis of the SECSI framework \textcolor{c1}{for the computation of the approximate CP decomposition of a noise corrupted low-rank tensor}, where we \textcolor{c77}{derive} closed-form expressions of the relative mean square error for each of the estimated factor matrices. These expressions are \textcolor{c77}{obtained} using a first-order perturbation analysis and are formulated in terms of the second-order moments of the noise, such that apart from a zero mean, no assumptions on the noise statistics are required. Simulation results exhibit an excellent match between the obtained closed-form expressions and the empirical results. Moreover, we propose a new Performance Analysis \textcolor{c77}{based} Selection (PAS) scheme to choose the \textcolor{c77}{final factor matrix} estimate. \textcolor{c55}{The results show that} the proposed PAS scheme outperforms the existing \textcolor{c66}{heuristics}, especially in \textcolor{c77}{the} high SNR \textcolor{c66}{regime}. \looseness=-1
\end{abstract}
\IEEEpeerreviewmaketitle
\begin{IEEEkeywords}
Perturbation analysis, higher-order singular value decomposition (HOSVD), tensor signal processing.
\end{IEEEkeywords}
%
%%%%%%%%%%%%%%%%%%%%%%%%%%%%%%%%%%%%%%%%%%%%%%%%%%%
%%% The introduction
%%%%%%%%%%%%%%%%%%%%%%%%%%%%%%%%%%%%%%%%%%%%%%%%%%%
%
%
\section{Introduction}\label{ch:SECSI}
%
%Tensor decompositions represent a powerful emerging tool in signal processing for multidimensional signals.
The Canonical Polyadic (CP) decomposition of $R$-way arrays is a powerful tool in multi-linear algebra. It allows to decompose a tensor into a sum of rank-one components. There exist many applications where the underlying signal of interest can be represented by a trilinear or multilinear CP model. These range from psychometrics and chemometrics over array signal processing and communications to biomedical signal processing, image compression or numerical mathematics \cite{CP_app1, CP_app2,CP_app3}. In practice, the signal of interest in these applications is contaminated by the noise. \textcolor{c66}{Therefore,} we only compute an approximate CP decomposition of the noisy signal. \looseness=-1

Algorithms \textcolor{c77}{for the computation of} an approximate CP decomposition from noisy observations are often based on Alternating Least Squares (ALS). These algorithms compute \textcolor{c55}{the} CP decomposition in an iterative manner procedure \cite{ALS1, ALS2}. The main drawbacks of ALS-based algorithms is that the number of required iterations may be very large\textcolor{c66}{, and} \textcolor{c55}{convergence is not guaranteed}. Moreover, ALS based algorithms are less accurate in ill-conditioned scenarios, \textcolor{c55}{especially if} the columns of the factor matrices are highly correlated.  Alternatively, semi-algebraic solutions, where the CP decomposition is rephrased into a generic problem such as \textcolor{c55}{the} Joint Eigenvalue Decomposition (JEVD) (also \textcolor{c55}{called} Simultaneous Matrix Diagonalization (SMD)),
have been proposed in the literature \cite{CP_app2}. The link between \textcolor{c88}{the} CP decomposition and \textcolor{c55}{the} JEVD is discussed in \cite{CPvsSMD} where it has been shown that the canonical components can be obtained from a simultaneous matrix diagonalization by congruence. A SEmi-algebraic framework for CP decompositions via SImultaneous matrix diagonalization (SECSI) was presented in \cite{SECSI_first, SECSI02, SECSI} for $R=3$ dimensional tensors \textcolor{c66}{and} was extended for tensors with $R > 3$ dimensions using the concept of generalized unfoldings (SECSI-GU) in \cite{SECSI3}.
The SECSI concept facilitates a distributed implementation on a parallel JEVDs to be solved depending upon the accuracy and the computational complexity requirements of the system. \textcolor{c77}{By solving} all JEVDs, multiple estimates of the factor matrices are obtained. The selection of the best factor matrices from the resulting estimates can be obtained either by using \textcolor{c66}{an} exhaustive search based best matching scheme or by using heuristic selection schemes with a reduced computational complexity. Several schemes with different accuracy-complexity trade-off points are presented in \cite{SECSI}. Thus, the SECSI framework results in more reliable estimates and also offers a flexible accuracy-complexity trade-off.\looseness=-1

\textcolor{c55}{An analytical performance assessment of the semi-algebraic algorithms to compute an approximate CP decomposition is of \textcolor{c88}{considerable} research interest. In \textcolor{c66}{the} literature, the performance of the CP decomposition is often evaluated using Monte-Carlo simulations.
%In \cite{staAna_parafac}, \textcolor{c66}{the PARAFAC or CP} model is exploited to denoise hyperspectral images for the first time and authors derive a Cramer-Rao lower bound to evaluate the performance of different algorithms.
To the best of our knowledge, there exists no analytical performance analysis of \textcolor{c66}{an approximate CP decomposition of noise-corrupted low-rank tensors} in the literature.} In this work, a first-order perturbation analysis of the SECSI framework is carried out, where apart from zero-mean and finite second order moments, no assumptions about the noise are required. The SECSI framework performs three distinct step to compute the approximate CP decomposition of a noisy tensor, \textcolor{c88}{as summarized in Fig. \ref{overview_SECSI}}. \textcolor{c66}{First, the} truncated higher order singular value decomposition (HOSVD) is \textcolor{c55}{used} to suppress \textcolor{c55}{the} noise. \textcolor{c66}{In the second step, several JEVDs are constructed from the core tensor.} \textcolor{c55}{This results in several estimates of the factor matrices. \textcolor{c66}{Lastly, the best factor \textcolor{c77}{matrices are} selected from these estimates by applying the}} best matching scheme or \textcolor{c66}{an appropriate heuristic scheme that has a lower computational complexity \cite{SECSI}.} Hence, a perturbation analysis for each of the steps is required for the overall performance analysis of the SECSI framework. In \cite{Asilmor_16}, we have already presented \textcolor{c55}{a} first-order perturbation analysis of low-rank tensor approximations based on the truncated HOSVD. We have also performed the perturbation analysis of JEVD algorithms \textcolor{c55}{which are based on \textcolor{c66}{the} indirect least squares (LS) cost function} in  \cite{Icassp_17}. In this \textcolor{c77}{paper}, we extend our work to the overall performance analysis of \textcolor{c66}{the} SECSI framework for $3$-D tensors.
%However, \textcolor{c55}{an} extension to the higher order tensors ($R>3$ ) can be carried out using  a Khatri-Rao factorization performance analysis.
\textcolor{c55}{Finally}, we present closed-form expressions for the relative Mean Square Factor Error (rMSFE) \textcolor{c77}{for each of the estimates of the three factor matrices}. These expressions are asymptotic in the SNR and \textcolor{c7}{are expressed} in terms of \textcolor{c7}{the} covariance matrix of the noise. Furthermore, we devise a new heuristic approach based on the performance analysis results to select the best estimates that we \textcolor{c55}{call} Performance Analysis based Selection (PAS) scheme. \looseness=-1
\begin{figure*}[t!]
          \centering
          \includegraphics[width=.95\linewidth]{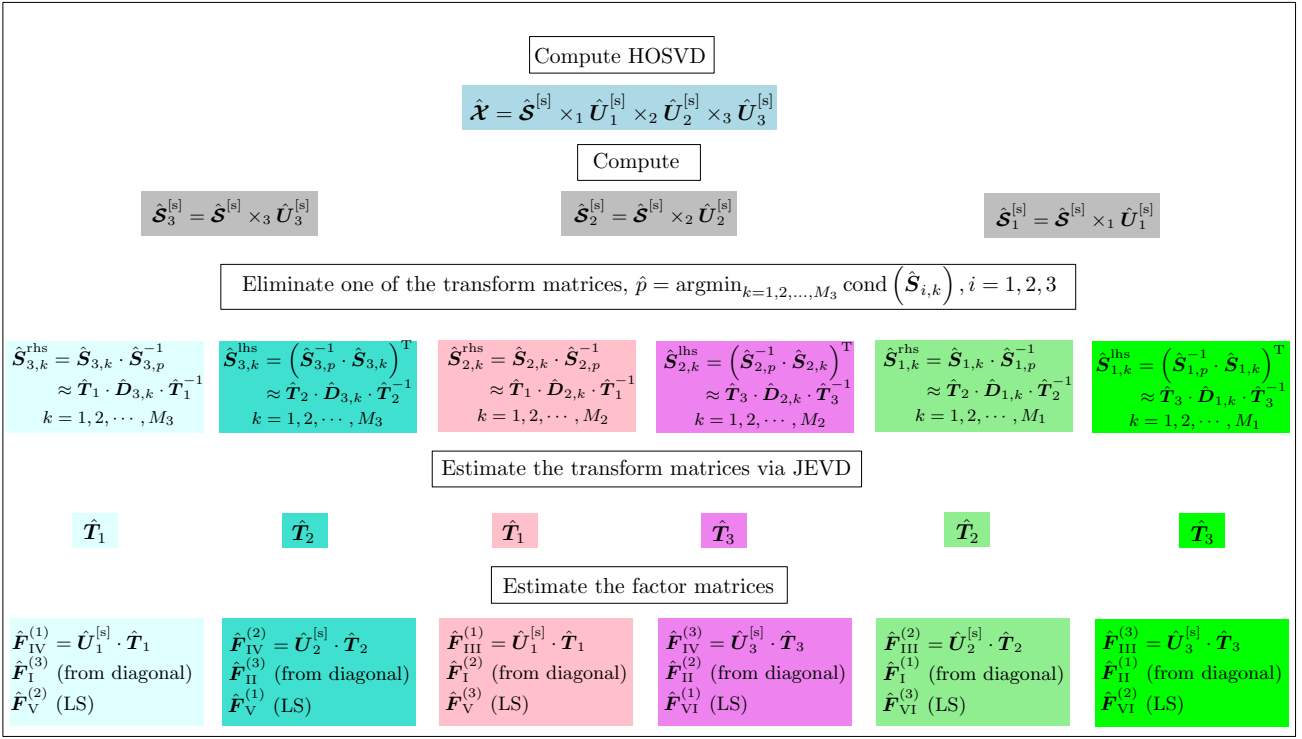}
          \caption{Overview of the SECSI framework to compute an approximate CP decomposition of a noise-corrupted low-rank tensor.}
          \vspace*{-0.6cm}
          \label{overview_SECSI}
  \end{figure*}	

\textcolor{c55}{The remainder of the paper is organized as follows. In Section \ref{overview}, we provide the data model.
%an overview of the SECSI framework where for clarity of the presentation we show derivations for noiseless tensors.
We perform the first-order perturbation analysis of the SECSI framework in terms of known noisy tensor in Section \ref{sec:SECSI_perturbation}. The closed-form rMSFE expressions for each of the factor matrices \textcolor{c66}{for the first JEVD (corresponding to the first column of Fig. \ref{overview_SECSI})} are presented in Section \ref{sec:SECSI_Performance}. The results are extended for each of the factor matrices \textcolor{c66}{resulting from the second JEVD (corresponding to the second column of Fig. \ref{overview_SECSI})} in Section \ref{sec:SECSI_LHS}. \textcolor{c66}{The results of the remaining JEVDs in Fig. \ref{overview_SECSI}  are obtained via \textcolor{c77}{permutations} of the previous results that are discussed in Section \ref{sec:SECSI_Performance}.} In Section \ref{sec:SECSI_PAS}, we propose a new estimates selection scheme \textcolor{c66}{that} is based on the performance analysis results. The simulation results are discussed in Section \ref{sim} \textcolor{c66}{and the conclusions are provided in Section} \ref{conc}.}\looseness=-1

\textit{Notation}:
For the sake of notation, we use $a$, $\mathbf{a}$, $\ma{A}$, and $\ten{A}$ for a scalar, column vector, matrix, and tensor, respectively, where, $\ten{A}(i,j,k)$  defines the element $(i,j,k)$ of a tensor $\ten{A}$. The same applies \textcolor{c77}{to a} matrix $\ma{A}(i,j)$ and \textcolor{c77}{a} vector $\mathbf{a}(i)$. The superscripts $^{-1}$, $^{+}$, $^{*}$, $^{\mathrm{T}}$, and $^{\mathrm{H}}$  denote the matrix inverse, Moore-Penrose pseudo inverse, conjugate, transposition, and conjugate transposition, respectively. We also use the notation $\mathbb{E}\{\cdot\}$, $\mathrm{tr}\{\cdot\}$, $\otimes$, $\diamond$, $\|\cdot\|_{\mathrm{F}}$, and $\|\cdot\|_{2}$ for the expectation, trace, Kronecker product, the Khatri-Rao (column-wise Kronecker) product, Frobenius norm, and 2-norm operators, respectively. \textcolor{c88}{Moreover, we define an operator $\mathrm{stack}\{\cdot\}$ that arranges the $K$ vectors or matrices as
\begin{equation}
\mathrm{stack}\{\ma{A}^{(k)}\} =  \begin{bmatrix}
\ma{A}_1  \\
\vdots \\
\ma{A}_{K}  \\
\end{bmatrix} \;\; \forall k =1,2,\cdots,K
\end{equation}}

For a matrix $\ma{A} = [\mathbf{a}_1, \mathbf{a}_2, \dots, \mathbf{a}_M] \in \compl^{N \times M}$, the operator $\mathrm{vec}\{\cdot\}$ defines the vectorization operation  as
$ \mathrm{vec}\{\ma{A}\}^{\mathrm{T}} = [\mathbf{a}_1^{\mathrm{T}} , \mathbf{a}_2^{\mathrm{T}} , \dots, \mathbf{a}_M^{\mathrm{T}} ]$.
This operator has the property
\begin{equation}\label{eq:VecOp}
\mathrm{vec}\{\ma{A} \cdot \ma{X} \cdot \ma{B}\} = (\ma{B}^{\mathrm{T}} \otimes \ma{A})\cdot\mathrm{vec}\{\ma{X}\},
\end{equation}
where $\ma{A}$, $\ma{X}$, $\ma{B}$ are matrices \textcolor{c77}{with proper dimensions}. Moreover, we define the diagonalization operator $\mathrm{diag}(\cdot)$ as in Matlab. Note that, when this operator is applied to a vector, the output is a diagonal matrix, and when applied to a matrix, the output is a column vector. Therefore, for the sake of clarity, we use the notation $\mathrm{Diag}(\cdot)$ when applying this operator to a vector, and $\mathrm{diag}(\cdot)$ when applying it to a matrix. Furthermore we define the operators $\mathrm{Ddiag}(\cdot)$ and $\mathrm{Off}(\cdot)$ as
\begin{align*}
\mathrm{Ddiag}( \ma{X} ) &= \mathrm{Diag}( \mathrm{diag}(\ma{X})  ) \textcolor{c7}{ \in \compl^{N \times N}}\\
\mathrm{Off}( \ma{X} ) &= \ma{X} - \mathrm{Diag}( \mathrm{diag}(\ma{X})  ) \textcolor{c7}{ \in \compl^{N \times N}}\, ,
\end{align*}
where $\ma{X}$ is a \textcolor{c7}{square matrix matrix of size $N \times N$}. Note that the $\mathrm{Ddiag}(\cdot)$ operator sets all the off-diagonal elements of $\ma{X}$ to zero, while the $\mathrm{Off}(\cdot)$ operator sets the diagonal elements of $\ma{X}$ to zero. \looseness=-1

Let $\ten{A}\in \compl^{M_1 \times M_2 \times \cdots \times M_R}$ be a $R$-way tensor, where $M_{\text{$r$}}$ is the size along the $r$-th dimension and $[\ten{A}]_{(r)}$ denote the \textcolor{c7}{$r$-mode} unfolding of $\ten{A}$ which is performed according to the reverse cyclical order \cite{LathauwerPerfAnalysis}.  The $r$-mode product of a tensor $\ten{A}$ with a matrix $\ma{B} \in \compl^{N \times M_{\text{$r$}}}$ (i.e., $\ten{A} \times_{\text{$r$}} \ma{B}$) is defined as
\begin{equation*}
\ten{C} = \ten{A} \times_{\text{$r$}} \ma{B} \Longleftrightarrow  [\ten{C}]_{(\text{$r$})} = \ma{B} \cdot [\ten{A}]_{(\text{$r$})},
\end{equation*}
where $\ten{C}$ is a tensor with the corresponding dimensions. Moreover, if $\ten{C} = \ten{A} \times_{1} \ma{X}^{(1)} \times_{2} \ma{X}^{(2)} \times_3 \cdots \times_{R} \ma{X}^{(R)}$, then
\begin{align*}
&\left[\ten{C}\right]_{(r)} = \ma{X}^{(r)} \cdot [\ten{A}]_{(r)} \cdot \nonumber\\
&\left( \ma{X}^{(r+1)} \otimes \ma{X}^{(r+2)}\otimes \cdots
\ma{X}^{(R)} \otimes \ma{X}^{(1)} \otimes \cdots  \otimes \ma{X}^{(r-1)} \right)^{\mathrm{T}},
%\label{eq:Math_UnfoldingProduct}
\end{align*}
where $\ma{X}^{(r)}, \forall r=1,2,\dots, R$ are matrices with the corresponding dimensions. \textcolor{c7}{For the sake of notational simplicity, we define the following notation for $r$-mode products
\begin{align*}
\ten{A}  \,\overset{R}{\underset{r=1}{ {\bigtimes}_r } } \ma{X}^{(r)} &= \ten{A} \times_1 \ma{X}^{(1)} \times_2 \ma{X}^{(2)} \times_3 \cdots \times_R \ma{X}^{(R)}.
% \\\ten{A}  \,\overset{R}{\underset{\substack{r=1 \\ r\neq j}}{ {\bigtimes}_r } } \ma{X}^{(r)} &=  \ten{A} \times_1 \ma{X}^{(1)} \times_2 \cdots \times_{j-1} \ma{X}^{(j-1)} \times_{j+1} \ma{X}^{(j+1)} \times_{j+2} \cdots  \times_R \ma{X}^{(R)}\, .
\end{align*}
In addition,  $\|\cdot\|_{\mathrm{H}} $ \textcolor{c77}{denotes} the higher order norm of a tensor, defined as
\begin{equation*}
\| \ten{A}\|_{\mathrm{H}} = \| [\ten{A}]_{(\text{$r$})} \|_{\mathrm{F}} = \| \mathrm{vec}\{ [\ten{A}]_{(r)} \} \|_{2} \quad \forall r=1,2,\dots,R.
\end{equation*}
Moreover}, the space spanned by the $r$-mode vectors is termed $r$-space of $\ten{A}$, \textcolor{c77}{the rank of $[\ten{A}]_{({r})}$ is the $r$-rank of $\ten{A}$}. Note that in general, the $r$-ranks (also referred to as the multilinear ranks) of a tensor $\ten{A}$ can all be different. \textcolor{c88}{Furthermore, the tensor rank  refers to the smallest possible $r$ such that a tensor can be written as the sum of $r$ rank-one tensors. Note that the tensor rank is not directly related to the multilinear ranks.} Furthermore, $\ma{e}_{d, k}$ denotes the $k$-th standard basis column vector of size $d$.
% % % % % % % % % % % % % % % % % % % % % % % % % % % % % % % % % % % % % % % % % % % % % % %
% % % % % % % % % % % % % % % % % % % % % % % % % % % % % % % % % % % % % % % % % % % % % % %
% % % % % % % % % % % % % % % % % % % % % % % % % % % % % % % % % % % % % % % % % % % % % % % % %
\section{Data Model}
\label{overview}
One of the major challenges in data-driven applications is that only a noise-corrupted version of the noiseless tensor $\ten{X}_0 \in \compl^{M_1 \times M_2 \times M_3}$ of tensor rank $d$ is observed. \textcolor{c11}{Let us consider the non-degenerate case first }where $d\leq \text{min}\{M_1, M_2, M_3\}$. However, as discussed in Section \ref{non-degenerate}, the SECSI framework can also be applied \textcolor{c11}{in the} degenerate case where the tensor rank of the noiseless tensor may be greater than any one of the dimensions (i.e., $ M_1 < d \leq  \text{min} \{M_2, M_3\}$). The CP decomposition of \textcolor{c22}{such a low-rank}  noiseless tensor is given by
\begin{align}
\ten{X}_0 &= \ten{I}_{3,d} \times_1 \ma{F}^{(1)} \times_2 \ma{F}^{(2)} \times_3 \ma{F}^{(3)}, \label{cp1}
\end{align}
where $\ma{F}^{(r)} \in \compl^{M_r \times d}, \forall r=1,2,3$ is the factor matrix in the $r$-th mode and $\ten{I}_{3,d}$ is the $3$-way identity tensor of size $d \times d \times d$.
\textcolor{c99}{In this work, we assume that the factor matrices are known, and the goal of the SECSI framework is to estimate them.}
%{Although (3) allows arbitrary permutations and scaling exchanges, we assume the existence of some "true" underlying factor matrices. The goal of the SECSI framework is to estimate these matrices up to the inevitable ambiguities, and our goal in the performance analysis is to predict the resulting relative errors, assuming (for the sake of the analysis) that the true matrices are known and that these ambiguities are resolved.}
%\textcolor{c88}{Since the CPD is only unique up to permutation and scaling, we assume that the columns of  $\ma{F}^{(r)}$ have unit norm.}
%
%
\textcolor{c44}{For future reference}, the SVD of the $r$-mode unfolding of the noiseless low-rank tensor $\ten{X}_0 \in \compl^{M_1 \times M_2 \times M_3}$ is given as
\begin{align}
&[\ten{X}_0]_{(r)} =
\ma{U}_{r} \cdot \ma{\Sigma}_{r} \cdot \ma{V}_{r}^{\mathrm{H}} \nonumber\\
& = \left[ \begin{matrix} \ma{U}_{r}^{[\mathrm{s}]} & \ma{U}_{r}^{[\mathrm{n}]} \end{matrix} \right] \!\!
\left[ \begin{matrix} \ma{\Sigma}_{r}^{[\mathrm{s}]} & \ma{0}_{d \times M_{r}}  \\
       \ma{0}_{(M_{r}-d)\times d} & \ma{0}_{(M_{r}-d)\times M_{r}} \end{matrix} \right] \!\!
\left[ \begin{matrix} \ma{V}_{r}^{[\mathrm{s}]} & \ma{V}_{r}^{[\mathrm{n}]} \end{matrix} \right]^{\mathrm{H}} \!\!\!\nonumber\\
&= \ma{U}_{r}^{[\mathrm{s}]}  \cdot \ma{\Sigma}_{r}^{[\mathrm{s}]}  \cdot \ma{V}_{r}^{[\mathrm{s}]^{\mathrm{H}}}, \textcolor{c11}{\forall r=1,2,3}\label{r_mode_svd}
\end{align}
where the superscripts  $[\mathrm{s}]$ and $[\mathrm{n}]$ represent \textcolor{c11}{the} signal and \textcolor{c11}{the} noise subspaces, respectively. Let
\begin{equation}
 \ten{X} = \ten{X}_0 + \ten{N} \in \compl^{M_1 \times M_2 \times M_3}
\end{equation}
 be the observed noisy tensor where the desired signal component $\ten{X}_0$ is superimposed by a zero-mean additive noise tensor $\ten{N} \in \compl^{M_1 \times M_2 \times M_3}$. The SVD of the $r$-mode unfolding of the observed noisy tensor $\ten{X}$ is given as
\begin{align}
&[\ten{X}]_{(r)} =
\hat{ \ma{U}}_{r} \cdot \hat{ \ma{\Sigma}}_{r} \cdot \hat{\ma{V}}_{r}^{\mathrm{H}} \nonumber\\
& = \left[ \begin{matrix} \hat{ \ma{U}}_{r}^{[\mathrm{s}]} & \hat{ \ma{U}}_{r}^{[\mathrm{n}]} \end{matrix} \right] \!\!
\left[ \begin{matrix} \hat{ \ma{\Sigma}}_{r}^{[\mathrm{s}]} & \ma{0}_{d \times M_{r}} \\
       \ma{0}_{(M_{r}-d)\times d} & \hat{ \ma{\Sigma}}_{r}^{[\mathrm{n}]}\end{matrix} \right] \!\!
\left[ \begin{matrix} \hat{ \ma{V}}_{r}^{[\mathrm{s}]} & \hat{ \ma{V}}_{r}^{[\mathrm{n}]} \end{matrix} \right]^{\mathrm{H}}.  \label{noisy_svd}
\end{align}
%{where the matrices with \textcolor{c22}{the} $(\hat{\cdot})$ denote estimates computed from noise corrupted data, \textcolor{c88}{e.g., we use $\hat{\bm{F}}^{(1)}$, $\hat{\bm{F}}^{(2)}$, and $\hat{\bm{F}}^{(3)}$ to denote the factor matrix estimates}.}
%Because of the superimposition of the noisy tensor, the observed signal have more than $p_r$ singular values. It means that we cannot find an exact but only a low-rank approximate ($\ten{X}_0 \approx \hat{\ten{X}}$) such that $[\ten{X}_0]_{(r)} \approx [\hat{\ten{X}}_0]_{(r)}
%= \hat{\ma{U}}_{r}^{[\mathrm{s}]}  \cdot \hat{\ma{\Sigma}}_{r}^{[\mathrm{s}]}  \cdot \hat{\ma{V}}_{r}^{[\mathrm{s}]^{\mathrm{H}}}$. Moreover, we define each noisy estimate (terms with the $\hat{\cdot}$ symbol above) to be given by sum of a noiseless estimate (i.e. estimates obtained when ${\ten{X}} = \ten{X}_0$) and an error term (terms with $\Delta$ or $\ten{N}$).
% % % % % % % % % % % % % % % % % % % % % % % % % % % % % % % % % % % %
% % % % % % % % % % % % % % % % % % % % % % % % % % % % % % % % % % % %
\section{First-Order Perturbation Analysis}\label{sec:SECSI_perturbation}
%In the following we perform a first order perturbation analysis, in terms of a known noise tensor $\ten{X}$.
% % % % % % % % % % % % % % % % % % % % % % % % % % % % % % % % % % % % %
% % % % % % % % % % % % % % % % % % % % % % % % % % % % % % % % % % % % % %
% % % % % % % % % % % % % % % % % % % % % % % % % % % % % % % % % % % % %
\subsection{Perturbation of the Truncated HOSVD}
A low-rank approximation \textcolor{c44}{of $\ten{X}$ }can be computed by \textcolor{c44}{truncating the HOSVD of the noisy tensor \[\ten{X} =\hat{\ten{S}} \times_1 \hat{\ma{U}}_{1}\times_2 \hat{\ma{U}}_{2} \times_3 \hat{\ma{U}}_{3}\]} as \cite{LathauwerPerfAnalysis}
\begin{align}
\textcolor{c11}{\hat{\ten{X}}} = \hat{\ten{S}}^{[\mathrm{s}]} \times_1 \hat{\ma{U}}_{1}^{[\mathrm{s}]} \times_2 \hat{\ma{U}}_{2}^{[\mathrm{s}]} \times_3 \hat{\ma{U}}_{3}^{[\mathrm{s}]}, \label{T_hosvd}
\end{align}
where $\hat{\ten{S}}^{[\mathrm{s}]} \in \compl^{d \times d \times d}$ is the \textcolor{c11}{truncated} core tensor and $ \hat{\ma{U}}_{r}^{[\mathrm{s}]} \in \compl^{M_r \times d}, \forall r=1,2,3$ is \textcolor{c11}{obtained from \cref{noisy_svd}}.  In \cite{Asilmor_16}, we presented a first order perturbation analysis of the truncated HOSVD where we obtained analytical expressions for the signal subspace error in each dimension of \textcolor{c88}{the tensor. Additionally}, we also obtained  the analytical expressions for the tensor reconstruction error induced by the low-rank approximation of the noise corrupted tensor. \textcolor{c11}{Let us express the noisy estimates in \cref{T_hosvd} as}
\begin{align}
\hat{\ma{U}}_{r}^{[\mathrm{s}]} &\triangleq \ma{U}_{r}^{[\mathrm{s}]} + \Delta \ma{U}_{r}^{[\mathrm{s}]}, \quad \forall r=1,2,3 \label{eq:SECSI_Ur}\\
\hat{\ten{S}}^{[\mathrm{s}]} &\triangleq {\ten{S}}^{[\mathrm{s}]} + \Delta {\ten{S}}^{[\mathrm{s}]}. \label{eq:SECSI_S}
\end{align}
 The perturbation present in the $r$-mode signal subspace estimate $\hat{\ma{U}}_{r}^{[\mathrm{s}]}$ is given by \cite{SVD_per}
\begin{align}
\Delta \ma{U}_{r}^{[\mathrm{s}]} &= \ma{U}_{r}^{[\mathrm{n}]} \cdot \ma{U}_{r}^{[\mathrm{n}]^{\mathrm{H}}} \cdot [{\ten{N}}]_{(r)} \cdot \ma{V}_{r}^{[\mathrm{s}]} \cdot \ma{\Sigma}_{r}^{[\mathrm{s}]^{-1}} + \mathcal{O}(\Delta^2) \nonumber\\
&\textcolor{c7}{= \ma{\Gamma}_{r}^{[\mathrm{n}]} \cdot [{\ten{N}}]_{(r)} \cdot \ma{V}_{r}^{[\mathrm{s}]} \cdot \ma{\Sigma}_{r}^{[\mathrm{s}]^{-1}} + \mathcal{O}(\Delta^2)} \, , \label{eq:SECSI_Delta_Ur}
\end{align}
\textcolor{c11}{where $\ma{\Gamma}_{r}^{[\mathrm{n}]} \triangleq \ma{U}_{r}^{[\mathrm{n}]} \cdot \ma{U}_{r}^{[\mathrm{n}]^{\mathrm{H}}}$ and all higher order terms are contained in $\mathcal{O}(\Delta^2)$.} Furthermore, we use this result to expand the expression for the truncated core tensor $\hat{\ten{S}}^{[\mathrm{s}]} = {\ten{X}} \overset{3}{\underset{r=1}{ {\bigtimes}_r } } \hat{\ma{U}}_{r}^{[\mathrm{s}]^{\mathrm{H}}}$  as
\begin{align*}
\ten{S}^{[\mathrm{s}]} + \Delta \ten{S}^{[\mathrm{s}]} &= \left(\ten{X}_0 + \ten{N} \right) \overset{3}{\underset{r=1}{ {\bigtimes}_r } } \left(\ma{U}_{r}^{[\mathrm{s}]^{\mathrm{H}}} + \Delta\ma{U}_{r}^{[\mathrm{s}]^{\mathrm{H}}} \right) + \mathcal{O}(\Delta^2)\\
&= \ten{X}_0 \overset{3}{\underset{r=1}{ {\bigtimes}_r } } \ma{U}_{r}^{[\mathrm{s}]^{\mathrm{H}}} + \ten{N} \overset{3}{\underset{r=1}{ {\bigtimes}_r } } \ma{U}_{r}^{[\mathrm{s}]^{\mathrm{H}}}\\
&\quad + \ten{X}_0 \times _{1} \Delta \ma{U}_{1}^{[\mathrm{s}]^{\mathrm{H}}} \times _{2} \ma{U}_{2}^{[\mathrm{s}]^{\mathrm{H}}} \times _{3} \ma{U}_{3}^{[\mathrm{s}]^{\mathrm{H}}} \\
&\quad + \ten{X}_0 \times _{1}  \ma{U}_{1}^{[\mathrm{s}]^{\mathrm{H}}} \times _{2} \Delta\ma{U}_{2}^{[\mathrm{s}]^{\mathrm{H}}} \times _{3} \ma{U}_{3}^{[\mathrm{s}]^{\mathrm{H}}} \\
&\quad + \ten{X}_0 \times _{1} \ma{U}_{1}^{[\mathrm{s}]^{\mathrm{H}}} \times _{2} \ma{U}_{2}^{[\mathrm{s}]^{\mathrm{H}}} \times _{3} \Delta\ma{U}_{3}^{[\mathrm{s}]^{\mathrm{H}}}  + \mathcal{O}(\Delta^2).
\end{align*}
Note that all terms that include \textcolor{c88}{products of more than one "$\Delta$ term"} are included in $\mathcal{O}(\Delta^2)$. Moreover, $\ten{N}$ is also considered as a “$\Delta$ term”.
\textcolor{c22}{In the noiseless case, the} truncated core tensor $\ten{S}^{[\mathrm{s}]}$ is equal to $\ten{S}^{[\mathrm{s}]} \triangleq \ten{X}_0 \overset{3}{\underset{r=1}{ {\bigtimes}_r } } \ma{U}_{r}^{[\mathrm{s}]^{\mathrm{H}}}$. \textcolor{c111}{Using the definitions in \cref{r_mode_svd} and \cref{eq:SECSI_Delta_Ur}, the above expression simplifies to
\begin{align}
\Delta \ten{S}^{[\mathrm{s}]} &= \ten{N} \overset{3}{\underset{r=1}{ {\bigtimes}_r } } \ma{U}_{r}^{[\mathrm{s}]^{\mathrm{H}}} + \mathcal{O}(\Delta^2) \, . \label{eq:SECSI_DeltaSs}
\end{align}}
% % % % % % % % % % % % % % % % % % % % % % % % % % % % %
% % % % % % % % % % % % % % % % % % % % % % % % % % % % % % % % % % % % % % % % % % % %
% % % % % % % % % % % % % % % % % % % % % % % % % % % % % % % % % % % % % % % % % % % % % % % %
% % % % % % % % % % % % % % % % % % % % % % % % % % % % % % % % % % % % % % % % % % % % % % % % % % %
\begin{table*}[t!]
\centering
\begin{tabular}{|l|l|}
\hline
%\multicolumn{2}{|c|}{\small{SECSI Algorithm: Factor Matrix Estimates resulting from the 2 JEVD construction from Eq. (13)} }      \\ \hline
\multicolumn{2}{|l|}{\hspace{100pt} 1: Compute $\hat{\ma{U}}_{r}^{[\mathrm{s}]}$ for $r=1,2,3$ via the HOSVD of ${\ten{X}}$}  \\
\multicolumn{2}{|l|}{\hspace{100pt} 2: $\hat{\ten{S}}^{[\mathrm{s}]} = {\ten{X}} \overset{R}{\underset{r=1}{ {\bigtimes}_r } } \hat{\ma{U}}_{r}^{[\mathrm{s}]^{\mathrm{H}}}$}                                                                                                                                                  \\
\multicolumn{2}{|l|}{\hspace{100pt} 3: $\hat{\ten{S}}_3^{[\mathrm{s}]} = \hat{\ten{S}}^{[\mathrm{s}]} \times _{3} \hat{\ma{U}}_{3}^{[\mathrm{s}]}$}  \\
\multicolumn{2}{|l|}{\hspace{100pt} 4: $\hat{\ma{S}}_{3,k} = \hat{\ten{S}}_3^{[\mathrm{s}]} \times_{3} \ma{e}_{M_3,k}^{\mathrm{T}}$ for $k=1,2,\dots, M_3$}                                                                                                                                                                      \\
\multicolumn{2}{|l|}{\hspace{100pt} 5: $\textcolor{c7}{\hat{p}} = \argmin_{k = 1,2,\dots, M_3} \mathrm{cond}\left(\hat{\ma{S}}_{3,k}\right)$} \\ \hline
6: $\hat{\ma{S}}_{3,k}^{\mathrm{rhs}} = \hat{\ma{S}}_{3,k} \cdot,\hat{\ma{S}}_{3,p}^{-1}$ for $k=1,2,\dots,M_3$                                 & 6: $\hat{\ma{S}}_{3,k}^{\mathrm{lhs}} \triangleq \left(\hat{\ma{S}}_{3,p}^{-1} \cdot \hat{\ma{S}}_{3,k}\right)^{\mathrm{T}}$ for $k=1,2,\dots,M_3$              \\
7: Compute $\hat{\ma{T}}_1$ and $\hat{\ma{D}}_{3,k}$ via JEVD of $\left\{\hat{\ma{S}}_{3,k}^{\mathrm{rhs}}\right\}_{k=1}^{M_3}$                 & 7: Compute $\hat{\ma{T}}_2$ and $\hat{\ma{D}}_{3,k}$ via JEVD of $\left\{\hat{\ma{S}}_{3,k}^{\mathrm{lhs}}\right\}_{k=1}^{M_3}$                                  \\
8: $\hat{{\ma{F}}}^{(1)} = \hat{\ma{U}}_{1}^{[\mathrm{s}]} \cdot \hat{\ma{T}}_1$                                                                & 8: $\hat{{\ma{F}}}^{(1)} = \left[\ten{X}\right]_{(1)} \cdot \left[ \hat{{\ma{F}}}^{(2)} \diamond \hat{{ \ma{F}}}^{(3)}\right]^{+\mathrm{T}}$ \\
9: $\hat{\tilde{\ma{F}}}^{(3)}(k,:) = \mathrm{diag}\left( \hat{\ma{D}}_{3,k} \right)^{\mathrm{T}}$                                              & 9: $\hat{{\ma{F}}}^{(2)} = \hat{\ma{U}}_{2}^{[\mathrm{s}]}\cdot \hat{\ma{T}}_{2}$                                                                         \\
10: $\hat{{\ma{F}}}^{(2)} = [{\ten{X}}]_{(2)} \cdot \left( \hat{{\ma{F}}}^{(3)} \diamond \hat{\ma{F}}^{(1)} \right) ^{+\mathrm{T}}$ & 10: $\hat{\tilde{\ma{F}}}^{(3)}(k,:) = \mathrm{diag}\left( \hat{\ma{D}}_{3,k} \right)^{\mathrm{T}}$                                                             \\ \hline
\end{tabular}
\caption{SECSI Algorithm: Factor matrix estimates resulting from the 2 JEVD construction from Eq. (13). The whole SECSI framework is shown in Fig. \ref{overview_SECSI}}.
\label{alg:SECSI}
\vspace*{-0.5cm}
\end{table*}
% % % % % % % % % % % % % % % % % % % % % % % % % % % % % % % % % % % % % % % % % % % % % % % % % % %
% % % % % % % % % % % % % % % % % % % % % % % % % %
% % % % % % % % % % % % % % % % % % % % % % % % % % % % % % % % % % % % % % % % % % % % % % % % % % %
\subsection{Perturbation of the JEVD Estimates }

For a $3$-way array, we can construct up to 6 JEVD problems in the SECSI framework \cite{SECSI} that are obtained \textcolor{c88}{from
\begin{align}
\hat{\ten{S}}^{[\mathrm{s}]}_{1} &= \hat{\ten{S}}^{[\mathrm{s}]} \times_1 \hat{\ma{U}}_{1}^{[\mathrm{s}]} \label{1_mode} \in \mathbb{C}^{M_1\times d\times d}\\
\hat{\ten{S}}^{[\mathrm{s}]}_{2} &= \hat{\ten{S}}^{[\mathrm{s}]} \times_2 \hat{\ma{U}}_{2}^{[\mathrm{s}]} \label{2_mode} \in \mathbb{C}^{d\times M_2\times d}\\
\hat{\ten{S}}^{[\mathrm{s}]}_{3} &= \hat{\ten{S}}^{[\mathrm{s}]} \times_3 \hat{\ma{U}}_{3}^{[\mathrm{s}]} \in \mathbb{C}^{d\times d\times M_3}, \label{3_mode}
\end{align}
respectively.} As an example, \textcolor{c11}{let us} consider the \textcolor{c44}{two} JEVD problems constructed from \cref{3_mode}. \textcolor{c11}{Here the} noisy estimate $\hat{\ten{S}}_3^{[\mathrm{s}]}$ \textcolor{c11}{can be expressed as}
\begin{align}
\hat{\ten{S}}_3^{[\mathrm{s}]} &\triangleq {\ten{S}}_3^{[\mathrm{s}]} + \Delta {\ten{S}}_3^{[\mathrm{s}]}. \label{eq:SECSI_S3}
\end{align}
 \textcolor{c44}{Using \cref{3_mode}, we get}
\begin{align}
\ten{S}_3^{[\mathrm{s}]} &+ \Delta \ten{S}_3^{[\mathrm{s}]} = \left(\ten{S}^{[\mathrm{s}]} + \Delta\ten{S}^{[\mathrm{s}]}\right) \times _{3}  \left(\ma{U}_{3}^{[\mathrm{s}]} + \Delta\ma{U}_{3}^{[\mathrm{s}]} \right) \nonumber \\
&=\ten{S}_3^{[\mathrm{s}]} + \Delta\ten{S}^{[\mathrm{s}]} \times _{3} \ma{U}_{3}^{[\mathrm{s}]} + \ten{S}^{[\mathrm{s}]} \times _{3} \Delta \ma{U}_{3}^{[\mathrm{s}]} + \mathcal{O}(\Delta^2). \nonumber
\end{align}
\textcolor{c11}{Therefore, we have}
\begin{align}
\Delta \ten{S}_3^{[\mathrm{s}]} =& \Delta \ten{S}^{[\mathrm{s}]} \times _{3} \ma{U}_{3}^{[\mathrm{s}]} + \ten{S}^{[\mathrm{s}]} \times _{3} \Delta \ma{U}_{3}^{[\mathrm{s}]} + \mathcal{O}(\Delta^2).\label{eq:SECSI:Delta_S3}
\end{align}
\textcolor{c11}{According to \cite{SECSI_first, SECSI}}, we define the 3-mode slices  \textcolor{c44}{of $\hat{\ma{S}}_{3}$} as \textcolor{c88}{
\begin{align}
\hat{\ma{S}}_{3,k} \triangleq \hat{\ten{S}}^{[\mathrm{s}]}_3 \times_3 \ma{e}_{M_3, k}^{\mathrm{T}} \in \mathbb{C}^{d\times d}, \quad k=1,2,...M_3, \label{slices}
\end{align}
where}
%$\ma{e}_{M_3, k} \in \{0,1\}^{M_3 \times 1}$ denotes the $k$-th standard basis vector and
$ \hat{\ma{S}}_{3,k}$ represents the $k$-th slice (along the third dimension) of $\hat{\ten{S}}_3^{[\mathrm{s}]}$ \textcolor{c22}{in \cref{3_mode}}. As explained in \cite{SECSI}, these slices satisfy
\begin{equation}
\mathrm{Diag}\left\{  \hat{\ma{F}}^{(3)}(k,:) \right\} \approx \hat{\ma{T}}_{1}^{-1} \cdot \hat{\ma{S}}_{3,k} \cdot \hat{\ma{T}}_{2}^{-1}, \label{k-slice-def}
\end{equation}
\textcolor{c11}{where $\hat{\ma{T}}_{1}$ and $\hat{\ma{T}}_{2}$ are transformation matrices \textcolor{c88}{obtained by solving the associated JEVD problems}. In the noiseless case, the factor matrices are related to the signal subspaces via these transformation matrices as $\ma{F}^{(r)} = {\ma{U}}_{r}^{[\mathrm{s}]}\cdot \ma{T}_{r}$, \textcolor{c88}{$ r =1,2,3$}.} \textcolor{c11}{Defining} the perturbation in the $k$-th slice, \textcolor{c11}{we get}
\begin{align}
\hat{\ma{S}}_{3,k} &\triangleq {\ma{S}}_{3,k}  + \Delta {\ma{S}}_{3,k}, \quad \forall k=1,2,\dots, M_3. \label{eq:SECSI_S3k}
\end{align}
\textcolor{c44}{Using \cref{slices}, this results in}
\begin{align}
\Delta \ma{S}_{3,k} =& \Delta \ten{S}_3^{[\mathrm{s}]} \times _{3} \ma{e}_{M_3, k}^{\mathrm{T}} + \mathcal{O}(\Delta^2). \label{eq:Delta_S3k}
\end{align}
\textcolor{c22}{According to \cite{SECSI},}  we select the slice of $\hat{\ten{S}}_{3}^{[\mathrm{s}]}$ with the lowest condition number, \textcolor{c22}{i.e.,} $\hat{\ma{S}}_{3,p}$ where $p = \argmin_{k}\left\{ \mathrm{cond}\left( \hat{\ma{S}}_{3,k} \right) \right\}$ and $\mathrm{cond}(\cdot)$ denotes the condition number operator.
This leads to two \textcolor{c44}{sets of matrices}, namely the right-hand-side (rhs) set and the left-hand-side (lhs) \textcolor{c22}{set that} are defined as
\begin{align}
\hat{\ma{S}}_{3,k}^{\mathrm{rhs}} &\triangleq \hat{\ma{S}}_{3,k} \cdot \hat{\ma{S}}_{3,p}^{-1}, \quad \forall k=1,2,\dots, M_3 \label{eq:RHS1}\\
\hat{\ma{S}}_{3,k}^{\mathrm{lhs}} &\triangleq \left(\hat{\ma{S}}_{3,p}^{-1} \cdot \hat{\ma{S}}_{3,k}\right)^{\mathrm{T}}, \quad \forall k=1,2,\dots, M_3. \label{eq:LHS1}
\end{align}
As an example, we compute the perturbation in \cref{eq:RHS1}. \textcolor{c22}{To this end, let us obtain the perturbation in $\hat{\ma{S}}_{3,k}^{\mathrm{rhs}}$ as}
\begin{align}
\hat{\ma{S}}_{3,k}^{\mathrm{rhs}} &\triangleq {\ma{S}}_{3,k}^{\mathrm{rhs}} + \Delta {\ma{S}}_{3,k}^{\mathrm{rhs}}, \quad \forall k=1,2,\dots, M_3. \label{eq:SECSI_S3kRHS}
\end{align}
Using this definition, we now expand \cref{eq:RHS1}, as
\begin{align*}
\ma{S}_{3,k}^{\mathrm{rhs}} + \Delta \ma{S}_{3,k}^{\mathrm{rhs}}
= \left(\ma{S}_{3,k} + \Delta\ma{S}_{3,k}\right) \left(\ma{S}_{3,p} + \Delta\ma{S}_{3,p}\right)^{-1} + \mathcal{O}(\Delta^2)
%&= \left(\ma{S}_{3,k} + \Delta\ma{S}_{3,k}\right) \cdot  \left(\ma{S}_{3,p}^{-1} + \Delta\ma{S}_{3,p}^{-1}\right) + \mathcal{O}(\Delta^2) \nonumber \\
%&=  \ma{S}_{3,k}^{\mathrm{RHS}} + \Delta \ma{S}_{3,k} \cdot \ma{S}_{3,p}^{-1} + \ma{S}_{3,k} \cdot \Delta \ma{S}_{3,p}^{-1} + \mathcal{O}(\Delta^2) \, . \label{eq:DeltaS3_RHS}
\end{align*}
Using the Taylor's expansion  \textcolor{c44}{of the matrix inverse}, we get
\begin{align*}
\left(\ma{S}_{3,p} + \Delta\ma{S}_{3,p}\right)^{-1}= \ma{S}_{3,p}^{-1}-
\ma{S}_{3,p}^{-1} \cdot \Delta \ma{S}_{3,p} \cdot \ma{S}_{3,p}^{-1}  + \mathcal{O}(\Delta^2).
\end{align*}
\textcolor{c22}{According to  \textcolor{c88}{\cref{eq:SECSI_S3kRHS}}, the perturbation in the slices $\hat{\ma{S}}_{3,k}^{\mathrm{rhs}}, \forall k=1,2,\dots,M_3$  is given by}
\begin{align}
\Delta \ma{S}_{3,k}^{\mathrm{rhs}}
= \Delta \ma{S}_{3,k} \cdot \ma{S}_{3,p}^{-1} - \ma{S}_{3,k} \cdot \ma{S}_{3,p}^{-1} \cdot \Delta \ma{S}_{3,p} \cdot \ma{S}_{3,p}^{-1}  + \mathcal{O}(\Delta^2). \label{eq:DeltaS3_RHS}
\end{align}
\textcolor{c22}{Using the results in \cref{k-slice-def}, it is easy to show that the two sets of matrices in \cref{eq:RHS1} and \cref{eq:LHS1} correspond to the following JEVD problems
\begin{align}
\hat{\ma{S}}_{3,k}^{\mathrm{rhs}} \approx \hat{\ma{T}}_{1} \cdot \hat{\ma{D}}_{3,k}\cdot {\ma{T}}_{1}^{-1} \quad \forall k=1,2,\dots, M_3, \label{eq:RHS}\\
\hat{\ma{S}}_{3,k}^{\mathrm{lhs}} \approx \hat{\ma{T}}_{2} \cdot \hat{\ma{D}}_{3,k}\cdot \hat{\ma{T}}_{2}^{-1}, \quad \forall k=1,2,\dots, M_3,  \label{eq:LHS}
\end{align}
respectively,} where  \textcolor{c44}{the diagonal matrices} $\hat{\ma{D}}_{3,k} $ are defined as\textcolor{c88}{
\begin{align}
\hat{\ma{D}}_{3,k} \triangleq \mathrm{Diag}\left\{  \hat{\ma{F}}^{(3)}(k,:) \right\} \cdot \mathrm{Diag}\left\{  \hat{\ma{F}}^{(3)}(p,:) \right\}^{-1} \label{D3k}.
\end{align}
Eqs.} (\ref{eq:RHS}) and (\ref{eq:LHS}) show that $\hat{\ma{T}}_{1}$ and $\hat{\ma{T}}_{2}$ can be found via an approximate joint diagonalization of the matrix slices $\hat{\ma{S}}_{3,k}^{\mathrm{rhs}}$ and $\hat{\ma{S}}_{3,k}^{\mathrm{lhs}}$, respectively. \textcolor{c44}{Such an} approximate joint diagonalization of \cref{eq:RHS} and \cref{eq:LHS} can, for instance, be achieved via joint diagonalization algorithms based on the indirect least squares (LS) cost function such as Sh-Rt \cite{ShRt}, JDTM \cite{JDTM}, \textcolor{c44}{or the} coupled JEVD \cite{Coupled}.
In \cite{Icassp_17}, we have presented a first order perturbation analysis of JEVD algorithms that are based on the indirect LS cost function. We can use these results to obtain analytical expressions for the perturbation in the estimates $\hat{\ma{T}}_1$, $\hat{\ma{T}}_2$, and $\hat{\ma{D}}_{3,k}, \forall k=1,2,\dots,M_3$. To this end, the perturbations in the $\hat{\ma{T}}_{1}$, $\hat{\ma{T}}_{2}$, and $\hat{\ma{D}}_{3,k}$ estimates are defined as
\begin{align}
\hat{\ma{T}}_1 &\triangleq {\ma{T}}_1 + \Delta {\ma{T}}_1 \label{eq:SECSI_T1}\\
\hat{\ma{T}}_2 &\triangleq {\ma{T}}_2 + \Delta {\ma{T}}_2 \label{eq:SECSI_T2}\\
\hat{\ma{D}}_{3,k} &\triangleq {\ma{D}}_{3,k} + \Delta {\ma{D}}_{3,k}, \quad \forall k=1,2,\dots, M_3 \label{eq:SECSI_D3k}\,.
\end{align}
%To this end, we define
\textcolor{c22}{According to \cite{Icassp_17}}, we also define the following matrices
\begin{align}
\ma{B}_0 &= \ma{J}_{(d)} \cdot \left(\ma{T}_{1}^{\mathrm{T}} \otimes \ma{T}_{1}^{-1} \right)\\
\ma{s}_k &= \mathrm{vec}\left\{\Delta \ma{S}_{3,k}^{\mathrm{rhs}} \right\}\\
\ma{A}_k &= \ma{J}_{(d)} \cdot \left[\left(\ma{I}_M \otimes \ma{T}_{1}^{-1}\cdot \ma{S}_{3,k}^{\mathrm{rhs}} \right) - \left(\ma{D}_{3,k} \otimes \ma{T}_{1}^{-1} \right) \right] \, ,\label{eq:w_secsi}
\end{align}
where \textcolor{c88}{$\ma{J}_{(d)} \in \{0,1\}^{d^2\times d^2}$} is a selection matrix \textcolor{c44}{that satisfies} the relation $\mathrm{vec}\left\{\mathrm{Off}\left( \ma{X} \right) \right\} = \ma{J}_{(d)} \cdot \mathrm{vec}\left\{ \ma{X} \right\}$, for any given $\ma{X} \in \compl^{d \times d}$.
%(the $ \mathrm{Off}(\cdot)$ operator sets the diagonal elements \textcolor{c44}{of a matrix} to zeros).
 \textcolor{c22}{After defining the quantities}
\begin{equation}\label{eq:BnAw_SECSI}
\ma{A} =  \begin{bmatrix}
\ma{A}_1  \\
\ma{A}_2  \\
\vdots \\
\ma{A}_{M_3}  \\
\end{bmatrix},
\quad
\bm{B} = \bm{I}_{M_3} \otimes \ma{B}_0,
\quad
\ma{s} =
\begin{bmatrix}
\ma{s}_1 \\
\ma{s}_2 \\
\vdots \\
\ma{s}_{M_3} \\
\end{bmatrix},
\end{equation}
we can use the results obtained in \cite{Icassp_17}. This leads to
\begin{align}
\mathrm{vec}\left\{\Delta\ma{T}_{1}\right\} &= -\ma{A}^{+} \cdot \ma{B} \cdot \ma{s} + \mathcal{O}(\Delta^2)  \label{eq:SECSI_vec_Delta_T1} \\
\Delta \ma{D}_{3,k} &=  \mathrm{Ddiag} \left( \ma{T}_{1}^{-1} \cdot \Delta \ma{S}_{3,k}^{\mathrm{rhs}} \cdot \ma{T}_{1} \right) + \mathcal{O}(\Delta^2), \, \label{eq:SECSI_vec_Delta_D3k}
\end{align}
for the rhs JEVD problem.
%
%where $(\cdot)^+$ denotes the Moore-Penrose pseudo-inverse and the $ \mathrm{Ddiag}(\cdot)$ operator sets the off diagonal elements to zero.
% % % % % % % % % % % % % % % % % % % % % % % % % % % % % % % % % % % % % % % % % % % % % % % % % %
% % % % % % % % % % % % % % % % % % % % % % % % % % % % % % % % % % % % % %
% % % % % % % % % % % % % % % % % % % % % % % % % % % % % % % % % % % % % % % % % % % % % % % % % %
\subsection{Perturbation of the Factor Matrix Estimates }
\label{Pert_factor_matrix}
\textcolor{c22}{Using the SECSI framework \textcolor{c88}{for a 3-way tensor}, we can get up to six different estimates for each factor matrix. \textcolor{c44}{The factor matrix} $\ma{F}^{(1)}$ can be estimated from the transform matrix $\hat{\ma{T}}_1$ via $\hat{{\ma{F}}}^{(1)} = \hat{\ma{U}}_{1}^{[\mathrm{s}]}\cdot \hat{\ma{T}}_{1}$ by using the result of the JEVD defined in \cref{eq:RHS}}. Expanding this equation leads to
\begin{align*}
\hat{\ma{F}}^{(1)} &=  \left(\ma{U}_{1}^{[\mathrm{s}]} + \Delta \ma{U}_{1}^{[\mathrm{s}]}\right) \cdot\left( \ma{T}_{1} + \Delta \ma{T}_{1}\right) +\mathcal{O}(\Delta^2)\\
&= \ma{F}^{(1)} + \Delta \ma{U}_{1}^{[\mathrm{s}]} \cdot \ma{T}_{1} +  \ma{U}_{1}^{[\mathrm{s}]} \cdot \Delta \ma{T}_{1} + \mathcal{O}(\Delta^2).
\end{align*}
\textcolor{c77}{Again,} we express the perturbation in $\hat{{\ma{F}}}^{(1)}$ as
\begin{equation}\label{eq:SECSI_F1}
\hat{{\ma{F}}}^{(1)} \triangleq \ma{F}^{(1)} + \Delta \ma{F}^{(1)}.
\end{equation}
This results in an expression for the perturbation $\Delta \ma{F}^{(1)}$ in the first factor matrix as
\begin{align}
\Delta \ma{F}^{(1)} =&  \Delta \ma{U}_{1}^{[\mathrm{s}]} \cdot \ma{T}_{1} +  \ma{U}_{1}^{[\mathrm{s}]} \cdot \Delta \ma{T}_{1} + \mathcal{O}(\Delta^2). \label{eq:SECSI_DeltaF1}
\end{align}
\textcolor{c22}{Using the result of \textcolor{c44}{the} JEVD defined in \cref{eq:RHS}}, \textcolor{c44}{a scaled version of the $k$th row of the factor matrix} $\ma{F}^{(3)}$  can be estimated from the diagonal of $\hat{\ma{D}}_{3,k}, \forall k=1,2,\dots, M_3,$ according to \cref{D3k} as
\begin{align*}
\hat{\tilde{\ma{F}}}^{(3)}(k,:) &=\mathrm{diag}\left( \hat{\ma{D}}_{3,k} \right)^{\mathrm{T}}\\
&= \mathrm{diag} \left(  \ma{D}_{3,k} + \Delta  \ma{D}_{3,k} \right)^{\mathrm{T}} \\
&= \tilde{\ma{F}}^{(3)}(k,:) + \mathrm{diag} \left(  \Delta \ma{D}_{3,k} \right)^{\mathrm{T}}  + \mathcal{O}(\Delta^2) \, ,
\end{align*}
\textcolor{c22}{where the perturbation in $\hat{\tilde{\ma{F}}}^{(3)}(k,:)$ is defined as
\begin{align*}
\hat{\tilde{\ma{F}}}^{(3)}(k,:) = \tilde{\ma{F}}^{(3)}(k,:) + \Delta \tilde{\ma{F}}^{(3)}(k,:),
\end{align*}
} which results in
\begin{align*}
\Delta \tilde{\ma{F}}^{(3)}(k,:) = \mathrm{diag} \left( \Delta \ma{D}_{3,k} \right)^{\mathrm{T}} + \mathcal{O}(\Delta^2).
\end{align*}
\textcolor{c22}{To get an expression \textcolor{c44}{of the corresponding factor matrix estimate} $\hat{{\ma{F}}}^{(3)}$, we take into account \cref{D3k} via  $\hat{\ma{F}}^{(3)} \triangleq \hat{\tilde{\ma{F}}}^{(3)}\cdot \mathrm{Diag} \left( \ma{F}^{(3)}(p,:) \right)$}. This leads to
%Note that the factor matrix estimates, even in the noiseless scenario, are not unique (since, they may have column permutation and scaling ambiguities). Therefore, in the noiseless case, the third mode estimated factor matrix $\tilde{\ma{F}}^{(3)}(k,:) \triangleq  \mathrm{diag}\left( {\ma{D}}_{3,k} \right)^{\mathrm{T}} = {\ma{F}}^{(3)}(k,:)\cdot \mathrm{Diag} \left( \ma{F}^{(3)}(p,:) \right)^{-1}$, this leads to $\tilde{\ma{F}}^{(3)} = {\ma{F}}^{(3)}\cdot \mathrm{Diag} \left( \ma{F}^{(3)}(p,:) \right)^{-1}$. We remove this column scaling from ${\tilde{\ma{F}}^{(3)}}$ to define $\hat{\ma{F}}^{(3)} \triangleq \hat{\tilde{\ma{F}}}^{(3)}\cdot \mathrm{Diag} \left( \ma{F}^{(3)}(p,:) \right)$. This leads us to
%
\begin{equation}\label{eq:SECSI_F3}
\hat{{\ma{F}}}^{(3)} = \ma{F}^{(3)} + \Delta \ma{F}^{(3)},
\end{equation}
where
\begin{equation}\label{eq:SECSI_DeltaF3}
\Delta \ma{F}^{(3)} = \Delta \tilde{\ma{F}}^{(3)} \cdot \mathrm{Diag} \left( \ma{F}^{(3)}(p,:) \right).
\end{equation}
Using this equation, the perturbation \textcolor{c44}{of the $k$th row of} $\Delta \ma{F}^{(3)}$ is obtained as
\begin{equation}\label{eq:SECSI_DeltaF3k}
\Delta \ma{F}^{(3)}(k,:) =  \mathrm{diag} \left( \Delta \ma{D}_{3,k} \right)^{\mathrm{T}} \cdot \mathrm{Diag} \left( \ma{F}^{(3)}(p,:) \right) +\mathcal{O}(\Delta^2).
\end{equation}
\textcolor{c44}{The factor matrix} $\ma{F}^{(2)}$ can be estimated via a LS fit. \textcolor{c44}{To this end, we define the \textcolor{c77}{LS} estimate of $\ma{F}^{(2)}$ as
\begin{align*}
\hat{\ma{F}}^{(2)} \triangleq [{\ten{X}}]_{(2)} \cdot \left( \hat{{\ma{F}}}^{(3)} \diamond \hat{\ma{F}}^{(1)} \right) ^{+\mathrm{T}}.
\end{align*}
Using equations \eqref{eq:SECSI_F1} and \eqref{eq:SECSI_F3}, we get:}
\begin{equation*}
\hat{\ma{F}}^{(2)} = \Big[\ten{X}_0 + \ten{N}\Big] _{(2)} \cdot \left[ \left(\ma{F}^{(3)} + \Delta \ma{F}^{(3)}\right) \diamond \left(\ma{F}^{(1)} + \Delta \ma{F}^{(1)}\right) \right] ^{+\mathrm{T}}.
%\label{eq:SECSI_F2hat}
\end{equation*}
Since $ \hat{\ma{F}}^{(2)} =  \ma{F}^{(2)} + \Delta \ma{F}^{(2)}$, we finally calculate $\Delta \ma{F}^{(2)}$ as
\begin{align}
&\Delta \ma{F}^{(2)} = \hat{\ma{F}}^{(2)} -  \ma{F}^{(2)} \nonumber\\
&=
\textcolor{c7}{ \Big[ \ten{X}_0 + \ten{N}\Big] _{(2)} \left[ \left(\ma{F}^{(3)} + \Delta \ma{F}^{(3)} \right) \diamond \left(\ma{F}^{(1)} + \Delta \ma{F}^{(1)} \right) \right]^{+\mathrm{T}} } - \ma{F}^{(2)} \nonumber
\end{align}
\textcolor{c44}{Using the Taylor expansion, we get}
\begin{align}
&\Delta \ma{F}^{(2)}=
[\ten{X}_0]_{(2)} \cdot \Big[ \left(\ma{F}^{(3)} \diamond \ma{F}^{(1)} \right) + \left(\Delta\ma{F}^{(3)} \diamond \ma{F}^{(1)} \right) \nonumber\\
&\quad + \left(\ma{F}^{(3)} \diamond \Delta\ma{F}^{(1)} \right) \Big]^{+\mathrm{T}} + [\ten{N}]_{(2)} \cdot \left[ \ma{F}^{(3)} \diamond \ma{F}^{(1)} \right]^{+\mathrm{T}} \nonumber\\
 &\quad - \ma{F}^{(2)} \textcolor{c44}{+\mathcal{O}(\Delta^2)} \nonumber\\
&=
[\ten{X}_0]_{(2)} \cdot \Big[ \left(\ma{F}^{(3)} \diamond \ma{F}^{(1)} \right)^{+} - \left(\ma{F}^{(3)} \diamond \ma{F}^{(1)} \right)^{+} \nonumber\\
&\quad \cdot \Big(\big(\Delta\ma{F}^{(3)} \diamond \ma{F}^{(1)} \big)  + \big(\ma{F}^{(3)} \diamond \Delta\ma{F}^{(1)} \big)\Big) \cdot \left(\ma{F}^{(3)} \diamond \ma{F}^{(1)} \right)^{+} \Big]^{\mathrm{T}} \nonumber\\
&\quad + [\ten{N}]_{(2)} \cdot \left[ \ma{F}^{(3)} \diamond \ma{F}^{(1)} \right]^{+\mathrm{T}}  - \ma{F}^{(2)} + \mathcal{O}(\Delta^2) \nonumber\\
%&=
%-[\ten{X}_0]_{(2)} \cdot \Big[ \left(\ma{F}^{(3)} \diamond \ma{F}^{(1)} \right)^{+}\cdot \Big(\left(\Delta\ma{F}^{(3)} \diamond \ma{F}^{(1)} \right)  \nonumber\\
%&\quad + \left(\ma{F}^{(3)} \diamond \Delta\ma{F}^{(1)} \right)\Big) \cdot \left(\ma{F}^{(3)} \diamond \ma{F}^{(1)} \right)^{+} \Big]^{\mathrm{T}}  \nonumber\\
%&\quad + [\ten{N}]_{(2)} \cdot \left[ \ma{F}^{(3)} \diamond \ma{F}^{(1)} \right]^{+\mathrm{T}} + \mathcal{O}(\Delta^2) \nonumber\\
&=
-\ma{F}^{(2)} \cdot \left[ \left(\Delta\ma{F}^{(3)} \diamond \ma{F}^{(1)} \right) + \left(\ma{F}^{(3)} \diamond \Delta\ma{F}^{(1)} \right)\right]^{\mathrm{T}} \nonumber\\
&\quad \cdot \left(\ma{F}^{(3)} \diamond \ma{F}^{(1)} \right)^{+\mathrm{T}}  + [\ten{N}]_{(2)} \cdot \left[ \ma{F}^{(3)} \diamond \ma{F}^{(1)} \right]^{+\mathrm{T}} + \mathcal{O}(\Delta^2). \label{eq:SECSI_DeltaF2}
\end{align}
In the same manner, another set of \textcolor{c44}{factor matrix} estimates can be obtained by solving the lhs JEVD problem for $\hat{\ma{T}}_2$ and $\hat{\ma{D}}_{3,k}, \forall k=1,2\dots, M_3$ \textcolor{c44}{in \cref{eq:LHS}}. This leads to the two sets of estimates \textcolor{c44}{(rhs and lhs)} in the third mode, as summarized in Table \ref{alg:SECSI}.
%We will provide the expressions for the perturbation in the LHS JEVD problem in \cref{sec:SECSI_LHS}.
%
%
%
Note that the obtained expressions can be directly used for the first and second mode estimates \textcolor{c44}{obtained from \cref{1_mode} and \cref{2_mode}, respectively}, since such estimates can be derived by applying the SECSI framework  to a permuted version of $\ten{X}$. For example, we can obtain the first mode estimates, \textcolor{c44}{corresponding to \cref{1_mode}}, by applying the SECSI framework on the third mode of $\mathrm{permute}(\ten{X},[2,3,1])$, where $\mathrm{permute}(\ten{A}, \text{ORDER})$ rearranges the dimensions of $\ten{A}$ so that they are in the order specified by the vector ORDER (as defined in Matlab). In the same manner, the second mode estimates, \textcolor{c44}{corresponding to \cref{2_mode}}, are obtained by using the SECSI framework on the third mode of $\mathrm{permute}(\ten{X},[1,3,2])$. \textcolor{c44}{Therefore, we obtain a total of six estimates for each factor matrix (two from each mode), as shown in Fig. \ref{overview_SECSI}. To select the final estimates, we can use best matching scheme or any low-complexity heuristic alternative that has been discussed in \cite{SECSI}.}
%Therefore, there is no loss of generality for performing only a perturbation analysis on the third mode. \looseness=-1
% % % % % % % % % % % % % % % % % % % % % % % % % % % % % % % % % % % % % % % % % % % % % %
% % % % % % % % % % % % % % % % % % % % % % % % % % % % % % % % % % % % % % % % % % % % % % %
% % % % % % % % % % % % % % % % % % % % % % % % % % % % % % % % % % % % % % % % % % % % % %
\section{Closed-Form rMSFE expressions}\label{sec:SECSI_Performance}
In this section, we present an analytical rMSFE expression \textcolor{c7}{for} each of the factor matrices \textcolor{c44}{estimates}. We first introduce some definitions which will be used subsequently to derive the analytical expressions. \textcolor{c88}{Additionally}, Theorem 1 will be used to resolve the scaling ambiguity in the factor matrices of the CP decomposition.
% % % % % % % % % % % % % % % % % % % % % % % % % % % %
% % % % % % % % % % % % % % % % % % % % % % % % % %
\subsection{Preliminary Definitions}\label{sec:SECSI_Preliminary}
Let $\ma{P}_{(M_1,M_2,M_3)}^{(r)} \in \{0,1\}^{(M_1\cdot M_2 \cdot M_3) \times (M_1\cdot M_2 \cdot M_3)}$ be the $r$-to-1 mode permutation matrix of any third order tensor $\ten{Z}\in \textcolor{c7}{\compl^{M_1 \times M_2 \times M_3}}$. This means that \textcolor{c44}{the permutation matrix $\ma{P}_{(M_1,M_2,M_3)}^{(r)}$ satisfies the property}
\begin{equation}\label{eq:P_r-to-1}
\mathrm{vec}\left\{ [\ten{Z}]_{(r)} \right\} = \ma{P}_{(M_1,M_2,M_3)}^{(r)} \cdot \mathrm{vec}\left\{ [\ten{Z}]_{(1)} \right\}.
\end{equation}
Note that $\ma{P}_{(M_1,M_2,M_3)}^{(1)} = \ma{I}_{(M_1\cdot M_2 \cdot M_3)}$. In the same manner, let $\ma{Q}_{(M_1,M_2)} \in \{0,1\}^{(M_1\cdot M_2) \times (M_1\cdot M_2)}$ be the permutation matrix \textcolor{c44}{that satisfies the following relation} for any $\ma{Z} \in \compl^{M_1 \times M_2}$
\begin{equation}\label{eq:Q_H-to-c}
\mathrm{vec}\left\{ \ma{Z}^{\mathrm{T}} \right\} = \ma{Q}_{(M_1,M_2)} \cdot \mathrm{vec}\left\{ \ma{Z} \right\} .
\end{equation}
\textcolor{c88}{Additionally}, let $\ma{n}_r \triangleq \mathrm{vec}\left\{ [\ten{N}]_{(r)} \right\}$ be the $r$-mode noise vector, with $\ma{R}_{\mathrm{nn}}^{(r)} \triangleq \mathbb{E}\{\ma{n}_r \cdot \ma{n}_r^{\mathrm{H}}\}$ and $\ma{C}_{\mathrm{nn}}^{(r)} \triangleq \mathbb{E}\{\ma{n}_r \cdot \textcolor{c7}{\ma{n}_r^{\mathrm{T}}}\}$ being the corresponding $r$-mode covariance and pseudo-covariance matrices\textcolor{c7}{,} respectively. Note that the these covariance matrices $\ma{R}_{\mathrm{nn}}^{(r)}$ are permuted versions of $\ma{R}_{\mathrm{nn}}^{(1)}$, since $\ma{n}_r = \ma{P}_{(M_1,M_2,M_3)}^{(r)} \cdot \ma{n}_1$. This property \textcolor{c44}{is also satisfied} for the pseudo-covariance matrices $\ma{C}_{\mathrm{nn}}^{(r)}$.

Let $\ma{W}_{(d)} \in \{0,1\}^{ d^2 \times d^2}$ be the diagonal elements selection matrix defined as
\begin{equation}\label{eq:SECSI_Wd}
\mathrm{vec}\left\{ \mathrm{Ddiag}\left(\ma{Z}\right) \right\} = \ma{W}_{(d)} \cdot  \mathrm{vec}\left\{ \ma{Z} \right\} \in \compl^{d^2 \times 1},
\end{equation}
where $\ma{Z} \in \compl^{d \times d}$ is a square matrix. \textcolor{c88}{Note that $\bm{W}_{(d)}$ is simply $\bm{I}_{d^2}-\bm{J}_{(d)}$, where $\bm{J}_{(d)}$ has already been defined below \cref{eq:w_secsi}}. Likewise, let $\ma{W}_{(d)}^{\mathrm{red}} \in \{0,1\}^{ d \times d^2}$ be the reduced \textcolor{c44}{dimensional} diagonal elements selection matrix \textcolor{c44}{that selects only the diagonal elements i.e.,}
\begin{equation}\label{eq:SECSI_Wred}
\mathrm{diag}\left(\ma{Z}\right) = \ma{W}_{(d)}^{\mathrm{red}} \cdot  \mathrm{vec}\left\{ \ma{Z} \right\} \in \compl^{d \times 1}
\end{equation}
for any square matrix $\ma{Z} \in \compl^{d \times d}$.
\begin{theorem}\label{teo:Popt}
	\textcolor{c88}{Let $\ma{Z}\in \compl^{d \times d}$ and $\Delta \ma{Z}\in \compl^{d \times d}$ be two matrices where the norm of each column in $\Delta \ma{Z}$ is much smaller than the corresponding column in $\ma{Z}$. Let  $\tilde{\ma{P}}\in \compl^{d \times d}$ be a diagonal matrix that introduces a scaling ambiguity in $(\ma{Z} + \Delta \ma{Z})$.} Then, a diagonal matrix $\ma{P}_{\mathrm{opt}}$ that resolve this scaling ambiguity can be expressed as	\textcolor{c77}{
	\begin{equation*}
	\ma{P}_{\mathrm{opt}} = \argmin_{\footnotesize{\ma{P}}\in \mathcal{M}_\text{D}(d)} \left\|\ma{Z} - (\ma{Z} + \Delta \ma{Z}) \cdot \tilde{\ma{P}} \cdot \ma{P}  \right\|_{\mathrm{F}}^2,
	\end{equation*}
where $\mathcal{M}_\text{D}(d)$ is the set of $d \times d$ diagonal matrices}. Thus, the following relation holds
	\begin{align*}
	&\ma{Z}-(\ma{Z} + \Delta \ma{Z}) \cdot \tilde{\ma{P}} \cdot \ma{P}_{\mathrm{opt}}    \\
	&=   \ma{Z} \cdot \mathrm{Ddiag}\left( \ma{Z}^{\mathrm{H}} \cdot  \Delta\ma{Z} \right)\cdot \bm{K}^{-1}- \Delta\ma{Z} +\mathcal{O}(\Delta^2),
	\end{align*}
where $\bm{K} = \mathrm{Ddiag}( \ma{Z}^{\mathrm{H}} \cdot{\ma{Z} })\in \real^{d \times d}$ is a diagonal matrix.
\end{theorem}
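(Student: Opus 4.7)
The plan is to reduce the matrix optimization to $d$ independent scalar least-squares problems and then Taylor-expand the resulting closed-form solution to first order in $\Delta\ma{Z}$. The key observation is that the product $\tilde{\ma{P}}\cdot\ma{P}$ of two diagonal matrices is itself diagonal, so introducing the change of variable $\ma{P}' = \tilde{\ma{P}}\cdot\ma{P}$ converts the problem into $\min_{\ma{P}'\in\mathcal{M}_\text{D}(d)} \|\ma{Z} - (\ma{Z}+\Delta\ma{Z})\cdot\ma{P}'\|_{\mathrm{F}}^2$, and the optimal $\ma{P}_{\mathrm{opt}}$ is recovered at the end via $\ma{P}_{\mathrm{opt}} = \tilde{\ma{P}}^{-1}\cdot\ma{P}'_{\mathrm{opt}}$ (which is harmless, because the residual expression only involves the product $\tilde{\ma{P}}\cdot\ma{P}_{\mathrm{opt}}=\ma{P}'_{\mathrm{opt}}$).

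Next I would exploit the fact that a diagonal right-multiplication just rescales columns, so the squared Frobenius norm decouples into $d$ independent terms of the form $\|\ma{z}_i - p'_i(\ma{z}_i+\Delta\ma{z}_i)\|_2^2$, where $\ma{z}_i$ and $\Delta\ma{z}_i$ denote the $i$-th columns of $\ma{Z}$ and $\Delta\ma{Z}$ and $p'_i$ is the $i$-th diagonal entry of $\ma{P}'$. Taking the Wirtinger derivative with respect to $(p'_i)^*$ and setting it to zero yields the standard scalar least-squares solution
\begin{equation*}
p'_{i,\mathrm{opt}} = \frac{(\ma{z}_i+\Delta\ma{z}_i)^{\mathrm{H}}\cdot\ma{z}_i}{(\ma{z}_i+\Delta\ma{z}_i)^{\mathrm{H}}\cdot(\ma{z}_i+\Delta\ma{z}_i)}.
\end{equation*}

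I would then Taylor-expand this ratio in $\Delta\ma{z}_i$, using $\|\Delta\ma{z}_i\|_2^2=\mathcal{O}(\Delta^2)$ and the first-order approximation $(a+\varepsilon)^{-1}\approx a^{-1}(1-\varepsilon/a)$, to obtain $p'_{i,\mathrm{opt}} = 1 - (\ma{z}_i^{\mathrm{H}}\cdot\Delta\ma{z}_i)/(\ma{z}_i^{\mathrm{H}}\cdot\ma{z}_i) + \mathcal{O}(\Delta^2)$. Substituting back into the per-column residual $\ma{z}_i - p'_{i,\mathrm{opt}}(\ma{z}_i+\Delta\ma{z}_i)$ and cancelling the leading $\ma{z}_i$ terms gives, to first order, $\ma{z}_i\cdot(\ma{z}_i^{\mathrm{H}}\cdot\Delta\ma{z}_i)/(\ma{z}_i^{\mathrm{H}}\cdot\ma{z}_i) - \Delta\ma{z}_i$. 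Finally, I would assemble the columns back into matrix form: the scalar numerators $\ma{z}_i^{\mathrm{H}}\cdot\Delta\ma{z}_i$ are exactly the diagonal entries of $\ma{Z}^{\mathrm{H}}\cdot\Delta\ma{Z}$, captured by $\mathrm{Ddiag}(\ma{Z}^{\mathrm{H}}\cdot\Delta\ma{Z})$, and the denominators $\ma{z}_i^{\mathrm{H}}\cdot\ma{z}_i$ are the diagonal entries of $\bm{K}=\mathrm{Ddiag}(\ma{Z}^{\mathrm{H}}\cdot\ma{Z})$, so right-multiplying by $\bm{K}^{-1}$ yields the claimed identity.

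The routine calculations here are the Wirtinger derivative and the Taylor expansion of the scalar ratio; the only place requiring any care is the bookkeeping of orders, in particular recognising that $\|\Delta\ma{z}_i\|_2^2$ and $(\ma{z}_i^{\mathrm{H}}\cdot\Delta\ma{z}_i)\cdot(\Delta\ma{z}_i^{\mathrm{H}}\cdot\ma{z}_i)$ are $\mathcal{O}(\Delta^2)$ and can be discarded, while the cross-term $(\ma{z}_i^{\mathrm{H}}\cdot\Delta\ma{z}_i)\cdot\ma{z}_i/(\ma{z}_i^{\mathrm{H}}\cdot\ma{z}_i)$ is $\mathcal{O}(\Delta)$ and must be retained. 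I expect the main (minor) obstacle to be writing the column-wise first-order expansion in a form that cleanly re-assembles into the matrix identity via the $\mathrm{Ddiag}$ operator — a notational rather than a conceptual difficulty.
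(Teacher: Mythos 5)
Your proof is correct, and it is in fact slightly more complete than the paper's own argument. Both proofs share the same skeleton --- reduce the problem to a column-wise scaling, Taylor-expand to first order in $\Delta\ma{Z}$, and reassemble the per-column scalars via $\mathrm{Ddiag}(\cdot)$ and $\bm{K}^{-1}$ --- but they differ in how $\ma{P}_{\mathrm{opt}}$ is obtained. You actually solve the optimization: after absorbing $\tilde{\ma{P}}$ into the variable, the squared Frobenius norm decouples into $d$ scalar least-squares problems whose exact minimizers $p'_{i,\mathrm{opt}} = \hat{\ma{z}}_i^{\mathrm{H}}\ma{z}_i/\|\hat{\ma{z}}_i\|_2^2$ you then expand. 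The paper instead \emph{posits} the closed form $\ma{P}_{\mathrm{opt}} = \mathrm{Ddiag}\left(\ma{Z}^{\mathrm{H}}\cdot\hat{\ma{Z}}\cdot\bm{K}^{-1}\right)^{-1}$, whose diagonal entries $\|\ma{z}_i\|_2^2/(\ma{z}_i^{\mathrm{H}}\hat{\ma{z}}_i)$ agree with the true minimizers only up to $\mathcal{O}(\Delta^2)$, and never verifies that this choice realizes the $\argmin$ of the stated cost; since the theorem's conclusion is itself only a first-order identity, the discrepancy is harmless, but your route is the one that actually justifies the $\argmin$ characterization in the statement. Your order bookkeeping (discarding $\|\Delta\ma{z}_i\|_2^2$ and the quadratic cross-terms while retaining the linear term $(\ma{z}_i^{\mathrm{H}}\Delta\ma{z}_i)/(\ma{z}_i^{\mathrm{H}}\ma{z}_i)$) is exactly what the paper accomplishes through the diagonal-matrix expansion $(\ma{I}_d+\ma{U})^{-1}=\ma{I}_d-\ma{U}+\mathcal{O}(\Delta^2)$, so the two derivations land on the same residual $\ma{Z}\cdot\mathrm{Ddiag}\left(\ma{Z}^{\mathrm{H}}\cdot\Delta\ma{Z}\right)\cdot\bm{K}^{-1}-\Delta\ma{Z}$.
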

\begin{proof}
	cf. Appendix \ref{proof:Popt}
\end{proof}
%\begin{lemma}\label{cor:ExpVal}
%	Let $\ma{Z}\in \compl^{M_1 \times M_2}$ be a random matrix defined as $\mathrm{vec}\left\{ \ma{Z} \right\} = \ma{K}\cdot \ma{n}^* + \ma{L}\cdot \ma{n}$, where \textcolor{c88}{$\ma{K}, \ma{L} \in \compl^{M_1 \cdot M_2 \times M_1 \cdot M_2}$} are deterministic matrices and $\ma{n} \in \compl^{M_1 \cdot M_2 \times 1}$ is a random vector. Thus, if $\mathbb{E}\{\ma{n}\cdot \ma{n}^{\mathrm{H}} \}= \ma{R}$ and $\mathbb{E}\{\ma{n}\cdot \ma{n}^{\mathrm{T}} \}= \ma{C}$, then ${E}(\ma{K},\ma{L}, \ma{R},\ma{C}) \triangleq \mathbb{E}\left\{ \| \ma{Z} \|_{\mathrm{F}}^{2} \right\}$ is equal to
%	\begin{align*}
%	{E}(\ma{K},\ma{L}, \ma{R},\ma{C}) &= \mathbb{E}\left\{ \| \ma{Z} \|_{\mathrm{F}}^{2} \right\} = \mathbb{E}\left\{\mathrm{tr}\left( \mathrm{vec}\left\{ \ma{Z} \right\} \cdot \mathrm{vec}\left\{ \ma{Z} \right\}^{\mathrm{H}} \right) \right\} \\
%	&= \mathrm{tr}\left(\ma{K}\cdot \ma{R}^* \cdot \ma{K}^{\mathrm{H}}\right) + \mathrm{tr}\left(\ma{L}\cdot \ma{R} \cdot \ma{L}^{\mathrm{H}}\right) \nonumber\\&\quad + 2\cdot \mathrm{Re}\left\{ \mathrm{tr} \left(\ma{L}\cdot \ma{C} \cdot \ma{K}^{\mathrm{H}}\right) \right\}
%	\end{align*}
%\end{lemma}
% % % % % % % % % % % % % % % % % % % % % % % % % % % % % % % % %
% % % % % % % % % % % % % % % % % % % % % % % % % % % % % % % % % % % % % % % % % % % %
% % % % % % % % % % % % % % % % % % % % % % % % % % % % % % % % % % %
\subsection{Factor Matrices rMSFE Expressions}
In this section, we derive closed-form $\mathrm{rMSFE}$ expressions for \textcolor{c44}{the three} factor matrices. \textcolor{c44}{As an example, we consider these closed-form $\mathrm{rMSFE}$ expressions using the JEVD of the rhs tensor slices in \cref{eq:RHS}. In Section \ref{sec:SECSI_LHS}, we also present the results for the lhs tensor slices in \cref{eq:LHS}.} The $\mathrm{rMSFE}$ in the $r$-th factor matrix is defined as
\begin{equation}\label{eq:SECSI_rMSFEdef}
\mathrm{rMSFE}^{(r)} = \mathbb{E}\left\{\min_{\footnotesize{\ma{P}}^{(r)}\in \mathcal{M}_\text{PD}(d)} \left( \frac{ \left\| \ma{F}^{(r)} -  \hat{\ma{F}}^{(r)}\cdot \ma{P}^{(r)} \right\|_{\mathrm{F}}^{2}}{ \left\| \ma{F}^{(r)}\right\|_{\mathrm{F}}^{2}} \right) \right\},
\end{equation}
\textcolor{c44}{where $\mathcal{M}_\text{PD}(d)$ is the set of $d \times d$ permuted diagonal matrices (also called monomial matrices), i.e., the matrices $\ma{P}^{(r)}$ correct the permutation and scaling ambiguity that is inherent in the estimation of the loading matrices \cite{SECSI} and \textcolor{c99}{$\ma{F}^{(r)}$ is the true factor matrix. The goal of the SECSI framework is to estimate these factor matrices up to the inevitable ambiguities, and our goal in the performance analysis is to predict the resulting relative errors, assuming (for the sake of the analysis) that the true matrices are known and that these ambiguities are resolved.}
Consequently, the rMSFE measures how accurately the actual CP model can be estimated from the noisy observations. }
%Since the scaling ambiguity is relevant for the pertubation analysis, we define the optimal column scaling matrices $\ma{P}_{\mathrm{opt}}^{(r)}$, for $r=1,2,3$, to be }
%%\textcolor{c44}{Sine CP-decomposition is only unique in column scaling and permutation},
%%
%\begin{equation}\label{eq:SECSI_Popt}
%\ma{P}_{\mathrm{opt}}^{(r)} = \argmin_{\footnotesize{\ma{P}}^{(r)}\in \mathcal{M}_\text{D}(d)} \left( \frac{ \left\| \ma{F}^{(r)} -  \hat{\ma{F}}^{(r)}\cdot \ma{P}^{(r)} \right\|_{\mathrm{F}}^{2}}{ \left\| \ma{F}^{(r)}\right\|_{\mathrm{F}}^{2}} \right).
%\end{equation}
%
\textcolor{c88}{After correcting the permutation ambiguity, the factor matrix estimates that we get from Monte-Carlo simulations \cite{SECSI} can be approximated as
\begin{align*}
 \hat{\ma{F}}^{(r)} &=({\ma{F}}^{(r)} + \Delta {\ma{F}}^{(r)})\cdot\tilde{\ma{P}}^{(r)},
\end{align*}
where $\Delta {\ma{F}}^{(r)}$ represents the perturbation that can be obtained from the performance analysis and $\tilde{\ma{P}}^{(r)}$ is a diagonal matrix modeling the scaling ambiguity. By using this relation, we rewrite the $\mathrm{rMSFE}$ definition in \cref{eq:SECSI_rMSFEdef} as}
\begin{equation}\label{eq:SECSI_rMSFEdef2}
\mathrm{rMSFE}^{(r)} = \mathbb{E}\left\{\frac{ \left\| \ma{F}^{(r)} -  (\ma{F}^{(r)} +  \Delta \ma{F}^{(r)})\cdot\tilde{\ma{P}}^{(r)}\cdot \ma{P}_{\mathrm{opt}}^{(r)} \right\|_{\mathrm{F}}^{2}}{ \left\| \ma{F}^{(r)}\right\|_{\mathrm{F}}^{2}} \right\},
\end{equation}
\textcolor{c88}{where  $\ma{P}_{\mathrm{opt}}^{(r)}$ is the optimal column scaling matrix, since the scaling ambiguity is only relevant for the perturbation analysis.} \textcolor{c44}{To derive closed-form $\mathrm{rMSFE}^{(r)}$ expressions, we first vectorize $ \ma{F}^{(r)} -  (\ma{F}^{(r)} +  \Delta \ma{F}^{(r)})\cdot\tilde{\ma{P}}^{(r)}\cdot \ma{P}_{\mathrm{opt}}^{(r)}$, use Theorem \ref{teo:Popt}, and the definitions in \cref{eq:SECSI_Wd} and \cref{eq:Q_H-to-c}, to get}
\textcolor{c88}{
\begin{align}
&\mathrm{vec}\left\{ \ma{F}^{(r)} -  (\ma{F}^{(r)} +  \Delta \ma{F}^{(r)}) \cdot\tilde{\ma{P}}^{(r)}\cdot\ma{P}_{\mathrm{opt}}^{(r)} \right\} \nonumber\\
&\approx
\mathrm{vec}\bigg\{\ma{F}^{(r)} \cdot \mathrm{Ddiag}\left( \ma{F}^{(r)^{\mathrm{H}}} \cdot  \Delta\ma{F}^{(r)} \right) \cdot \bm{K}_r^{-1}  -  \Delta\ma{F}^{(r)} \bigg\}\nonumber\\
&= \left( \ma{I}_{d} \otimes \ma{F}^{(r)} \right) \cdot \left( \bm{K}_r^{-1} \otimes \ma{I}_{d} \right)\cdot \nonumber\\
 &\quad\ma{W}_{(d)} \cdot \mathrm{vec}\Big\{ \ma{F}^{(r)^{\mathrm{H}}}
 \cdot  \Delta\ma{F}^{(r)} \Big\} - \mathrm{vec}\left\{ \Delta \ma{F}^{(r)}  \right\}\nonumber\\
&= \left( \ma{I}_{d} \otimes \ma{F}^{(r)} \right) \cdot \left( \bm{K}_r^{-1} \otimes \ma{I}_{d} \right) \cdot \ma{W}_{(d)}\cdot \nonumber\\
&\quad    \left( \ma{I}_{d} \otimes \ma{F}^{(r)^{\mathrm{H}}} \right)\cdot  \mathrm{vec}\left\{ \Delta \ma{F}^{(r)}  \right\}
 - \mathrm{vec}\left\{ \Delta \ma{F}^{(r)}  \right\}, \label{eq:SECSI_vec_rMSFEAll}
\end{align}
where $\bm{K}_r = \mathrm{Ddiag}\left( \ma{F}^{(r)^{\mathrm{H}}} \cdot  \ma{F}^{(r)} \right)$.}
Note that the resulting expression contains the vectorization of the perturbation in the respective factor matrix estimates.
%\textcolor{c44}{In this section, we want to express them in the following widely linear fashion
%\begin{equation}
%\mathrm{vec}\left\{ \Delta \ma{F}^{(r)}  \right\} =\ma{K}\cdot \ma{n}^*_1 + \ma{L}\cdot \ma{n}_1+ \mathcal{O}(\Delta^2). \label{general_form}
%\end{equation}}
As shown in Section \ref{Pert_factor_matrix}, the perturbations in the three factor matrix estimates (\cref{eq:SECSI_DeltaF1}, \cref{eq:SECSI_DeltaF2}, and \cref{eq:SECSI_DeltaF3}) are a function of different perturbations, i.e., $\Delta \ma{U}_{r}^{[\mathrm{s}]}$ in \cref{eq:SECSI_Delta_Ur}, $\Delta \ten{S}^{[\mathrm{s}]}$ in \cref{eq:SECSI_DeltaSs}, $\Delta \ten{S}_3^{[\mathrm{s}]}$ in \cref{eq:SECSI:Delta_S3}, $\Delta \ma{S}_{3,k}$ in \cref{eq:Delta_S3k}, $\Delta{\ma{S}}_{3,k}^{\mathrm{rhs}}$ in \cref{eq:DeltaS3_RHS}, and $\Delta \ma{D}_{3,k}$ in \cref{eq:SECSI_vec_Delta_D3k}. Therefore, to get final closed-form $\mathrm{rMSFE}$ expressions for the \textcolor{c7}{three} factor matrices in \cref{eq:SECSI_vec_rMSFEAll}, we \textcolor{c77}{vectorize} all of the perturbation expressions.
We start by applying the vectorization operator \textcolor{c44}{to} $\Delta \ma{U}_{r}^{[\mathrm{s}]}$ \textcolor{c44}{in}  \cref{eq:SECSI_Delta_Ur} and \textcolor{c44}{use} the $r$-to-1 mode permutation matrix \textcolor{c44}{in} \cref{eq:P_r-to-1} \textcolor{c44}{to get}
\begin{align}
&\mathrm{vec}\left\{ \Delta \ma{U}_{r}^{[\mathrm{s}]} \right\}
= \left(\ma{\Sigma}_r^{[\mathrm{s}]^{-1}} \ma{V}_r^{[\mathrm{s}]^{\mathrm{T}}} \otimes \ma{\Gamma}_r^{[\mathrm{n}]}\right) \cdot \ma{n}_r  + \mathcal{O}(\Delta^2)\nonumber\\
&= \left(\ma{\Sigma}_r^{[\mathrm{s}]^{-1}} \ma{V}_r^{[\mathrm{s}]^{\mathrm{T}}} \otimes \ma{\Gamma}_r^{[\mathrm{n}]}\right) \cdot \ma{P}_{(M_1,M_2,M_3)}^{(r)} \cdot \ma{n}_1 + \mathcal{O}(\Delta^2). \label{eq:SECSI_vec_Delta_Ur}
\end{align}
\textcolor{c44}{Next}, we \textcolor{c77}{vectorize the 1-mode} unfolding of $\Delta \ten{S}^{[\mathrm{s}]}$\textcolor{c7}{ in \cref{eq:SECSI_DeltaSs}}, by using the definition in \cref{eq:P_r-to-1}  to obtain\textcolor{c111}{
%and use the expressions obtained for $\mathrm{vec}\left\{ \Delta \ma{U}_{r}^{[\mathrm{s}]} \right\}$ and $\mathrm{vec}\left\{ \Delta \ma{U}_{r}^{[\mathrm{s}]^{\mathrm{H}}} \right\}$
\begin{align}
\mathrm{vec}&\left\{[\Delta \ten{S}^{[\mathrm{s}]}]_{(1)}\right\} =
\mathrm{vec}\left\{\left[\ten{N} \overset{3}{\underset{r=1}{ {\bigtimes}_r } } \ma{U}_{r}^{[\mathrm{s}]^{\mathrm{H}}} \right]_{(1)}\right\} + \mathcal{O}(\Delta^2) \nonumber\\
&= \underbrace{\left( \ma{U}_{2}^{[\mathrm{s}]^{\mathrm{H}}} \otimes \ma{U}_{3}^{[\mathrm{s}]^{\mathrm{H}}}  \otimes \ma{U}_{1}^{[\mathrm{s}]^{\mathrm{H}}}\right)}_{\ma{L}_{0}}\cdot \ma{n}_1  + \mathcal{O}(\Delta^2). \label{eq:SECSI_vec_Delta_S}
\end{align}}

%Therefore, Lemma \ref{cor:ExpVal} can be applied to obtain a closed-form expression for $\| [\Delta \ten{S}^{[\mathrm{s}]}]_{(1)} \|_{\mathrm{F}}^{2} $ \footnote{Note that $\| [\Delta \ten{S}^{[\mathrm{s}]}]_{(1)} \|_{\mathrm{F}}^{2} = \| \Delta \ten{S} \|_{\mathrm{H}}^{2}$}.
In the same manner, we expand $\mathrm{vec}\left\{[\Delta \ten{S}_{3}^{[\mathrm{s}]}]_{(1)}\right\}$ using  \cref{eq:SECSI_vec_Delta_Ur} \textcolor{c7}{and} \cref{eq:SECSI_vec_Delta_S} to get

\small
\begin{align}
&\mathrm{vec}\left\{[\Delta \ten{S}_{3}^{[\mathrm{s}]}]_{(1)}\right\} \nonumber \\
&=
\mathrm{vec}\left\{ [\Delta \ten{S}^{[\mathrm{s}]} \times _{3} \ma{U}_{3}^{[\mathrm{s}]} + \ten{S}^{[\mathrm{s}]} \times _{3} \Delta \ma{U}_{3}^{[\mathrm{s}]}]_{(1)}\right\}  + \mathcal{O}(\Delta^2) \nonumber\\
&= \ma{P}_{(d,d,M_3)}^{(3)^{\mathrm{T}}} \cdot \mathrm{vec}\left\{ [\Delta \ten{S}^{[\mathrm{s}]} \times _{3} \ma{U}_{3}^{[\mathrm{s}]} + \ten{S}^{[\mathrm{s}]} \times _{3} \Delta \ma{U}_{3}^{[\mathrm{s}]}]_{(3)}\right\}  \nonumber\\
&+ \mathcal{O}(\Delta^2)  \nonumber\\
&= \ma{P}_{(d,d,M_3)}^{(3)^{\mathrm{T}}} \cdot \left(\ma{I}_{d^2} \otimes \ma{U}_{3}^{[\mathrm{s}]}\right) \cdot \mathrm{vec}\left\{[\Delta \ten{S}^{[\mathrm{s}]}]_{(3)}\right\} \nonumber\\
&\, + \ma{P}_{(d,d,M_3)}^{(3)^{\mathrm{T}}} \cdot \left([ \ten{S}^{[\mathrm{s}]}]_{(3)}^{\mathrm{T}} \otimes \ma{I}_{M_3}\right) \cdot \mathrm{vec}\left\{\Delta \ma{U}_{3}^{[\mathrm{s}]}\right\} + \mathcal{O}(\Delta^2) \nonumber\\
&=
\ma{P}_{(d,d,M_3)}^{(3)^{\mathrm{T}}} \cdot \left(\ma{I}_{d^2} \otimes \ma{U}_{3}^{[\mathrm{s}]}\right) \cdot \ma{P}_{(d,d,d)}^{(3)}\cdot  \mathrm{vec}\left\{[\Delta \ten{S}^{[\mathrm{s}]}]_{(1)}\right\} \nonumber\\
&\, + \ma{P}_{(d,d,M_3)}^{(3)^{\mathrm{T}}} \cdot \left([\Delta \ten{S}^{[\mathrm{s}]}]_{(3)}^{\mathrm{T}} \otimes \ma{I}_{M_3}\right) \cdot \left(\textcolor{c7}{\ma{\Sigma}_3^{[\mathrm{s}]^{-1}}} \ma{V}_3^{[\mathrm{s}]^{\mathrm{T}}} \otimes \ma{\Gamma}_3^{[\mathrm{n}]}\right) \nonumber\\
&\, \cdot \ma{P}_{(M_1,M_2,M_3)}^{(3)} \cdot \ma{n}_1 + \mathcal{O}(\Delta^2) \nonumber\\
&=
\ma{P}_{(d,d,M_3)}^{(3)^{\mathrm{T}}} \cdot \left(\ma{I}_{d^2} \otimes \ma{U}_{3}^{[\mathrm{s}]}\right) \cdot \ma{P}_{(d,d,d)}^{(3)}\cdot   \ma{L}_{0} \cdot \ma{n}_1  + \ma{P}_{(d,d,M_3)}^{(3)^{\mathrm{T}}} \cdot \nonumber\\
&\,\left([\Delta \ten{S}^{[\mathrm{s}]}]_{(3)}^{\mathrm{T}}\cdot \textcolor{c7}{\ma{\Sigma}_3^{[\mathrm{s}]^{-1}}} \ma{V}_3^{[\mathrm{s}]^{\mathrm{T}}}  \otimes \ma{\Gamma}_3^{[\mathrm{n}]}\right) \cdot \ma{P}_{(M_1,M_2,M_3)}^{(3)}    \cdot \ma{n}_1 + \mathcal{O}(\Delta^2) \nonumber\\
& = \ma{L}_{1} \cdot \ma{n}_1 + \mathcal{O}(\Delta^2) \, , \label{eq:SECSI_vec_Delta_S3}
\end{align}
\normalsize
where
\begin{align}
&\ma{L}_{1} \triangleq
\ma{P}_{(d,d,M_3)}^{(3)^{\mathrm{T}}} \cdot \left(\ma{I}_{d^2} \otimes \ma{U}_{3}^{[\mathrm{s}]}\right) \cdot \ma{P}_{(d,d,d)}^{(3)} \cdot \ma{L}_{0}
+ \ma{P}_{(d,d,M_3)}^{(3)^{\mathrm{T}}} \cdot \nonumber\\&\left([ \ten{S}^{[\mathrm{s}]}]_{(3)}^{\mathrm{T}}\cdot \textcolor{c7}{\ma{\Sigma}_3^{[\mathrm{s}]^{-1}}} \ma{V}_3^{[\mathrm{s}]^{\mathrm{T}}}  \otimes \ma{\Gamma}_3^{[\mathrm{n}]}\right) \cdot \ma{P}_{(M_1,M_2,M_3)}^{(3)}
\label{eq:SECSI_L1}
\end{align}
%
%Note that, we again define new matrices $\ma{K}_1$ and $\ma{L}_1$  to simplify the notation, as well as formulating the \cref{eq:SECSI_vec_Delta_S3} to apply Corollary \ref{cor:ExpVal}.
Furthermore, we can use this result to obtain an expression for the vectorization of each slice $\Delta \ma{S}_{3,k} \quad \forall k=1,2,\dots, M_3$ of $\Delta \ten{S}_3^{[\mathrm{s}]}$. Using \textcolor{c44}{\cref{eq:Delta_S3k}} and \cref{eq:SECSI_vec_Delta_S3}, we derive an expression for $\mathrm{vec}\left\{\Delta \ma{S}_{3,k}\right\}$ as
\begin{align}
\mathrm{vec}\left\{\Delta \ma{S}_{3,k}\right\} &= \mathrm{vec}\left\{\Delta \ten{S}_3^{[\mathrm{s}]} \times _{3} \ma{e}_{M_3, k}^{\mathrm{T}}\right\} \nonumber\\
&= \mathrm{vec}\left\{[\Delta \ten{S}_3^{[\mathrm{s}]} \times _{3} \ma{e}_{M_3, k}^{\mathrm{T}}]_{(1)}\right\} \nonumber \\
&= \mathrm{vec}\left\{[\Delta \ten{S}_3^{[\mathrm{s}]}]_{(1)} \cdot (\ma{I}_{d} \otimes \ma{e}_{M_3, k}^{\mathrm{T}})^{\mathrm{T}}\right\}  + \mathcal{O}(\Delta^2) \nonumber\\
&= (\ma{I}_{d} \otimes \ma{e}_{M_3, k}^{\mathrm{T}} \otimes \ma{I}_{d})  \cdot \mathrm{vec}\left\{[\Delta \ten{S}_3^{[\mathrm{s}]}]_{(1)}\right\}  \nonumber\\
&= (\ma{I}_{d} \otimes \ma{e}_{M_3, k}^{\mathrm{T}} \otimes \ma{I}_{d})  \cdot  \ma{L}_{1} \cdot \ma{n}_1  \nonumber\\
&=  \ma{L}_{2}^{(k)} \cdot \ma{n}_1 + \mathcal{O}(\Delta^2) \, , \label{eq:SECSI_vec_Delta_S3k}
\end{align}
where
\begin{align}
\ma{L}_{2}^{(k)} &\triangleq
(\ma{I}_{d} \otimes \ma{e}_{M_3, k}^{\mathrm{T}} \otimes \ma{I}_{d})  \cdot \ma{L}_{1} \label{eq:SECSI_L2k}
\end{align}
for $k=1,2,\dots,M_3$. \textcolor{c44}{Next}, we use this result \textcolor{c7}{and \cref{eq:SECSI_DeltaSs}} to expand the vectorization of  \cref{eq:DeltaS3_RHS}. This leads to
\begin{align}
&\mathrm{vec}\left\{\Delta \ma{S}_{3,k}^{\mathrm{rhs}}\right\} = \nonumber \\
&= \mathrm{vec}\left\{ \Delta \ma{S}_{3,k} \cdot \ma{S}_{3,p}^{-1} - \ma{S}_{3,k} \cdot \ma{S}_{3,p}^{-1} \cdot \Delta \ma{S}_{3,p} \cdot \ma{S}_{3,p}^{-1} \right\} + \mathcal{O}(\Delta^2)  \nonumber\\
&= ( \ma{S}_{3,p}^{-\mathrm{T}} \otimes \ma{I}_{d}) \cdot \mathrm{vec}\left\{\Delta \ma{S}_{3,k}\right\} - ( \ma{S}_{3,p}^{-\mathrm{T}} \otimes \ma{S}_{3,k} \cdot \ma{S}_{3,p}^{-1}) \nonumber\\
&\quad \cdot \mathrm{vec}\left\{\Delta \ma{S}_{3,p}\right\}  + \mathcal{O}(\Delta^2) \nonumber\\
&= ( \ma{S}_{3,p}^{-\mathrm{T}} \otimes \ma{I}_{d}) \cdot \ma{L}_{2}^{(k)} \cdot \ma{n}_1  - ( \ma{S}_{3,p}^{-\mathrm{T}} \otimes \ma{S}_{3,k} \cdot \ma{S}_{3,p}^{-1}) \cdot  \ma{L}_{2}^{(p)} \cdot \ma{n}_1 \nonumber\\
&\quad + \mathcal{O}(\Delta^2) \nonumber\\
&=  \ma{L}_{3}^{(k)} \cdot \ma{n}_1 + \mathcal{O}(\Delta^2) \, , \label{eq:SECSI_vec_Delta_S3kRHS}
\end{align}
where
\begin{align}
\ma{L}_{3}^{(k)} &\triangleq
( \ma{S}_{3,p}^{-\mathrm{T}} \otimes \ma{I}_{d}) \cdot  \ma{L}_{2}^{(k)} - ( \ma{S}_{3,p}^{-\mathrm{T}} \otimes \ma{S}_{3,k} \cdot \ma{S}_{3,p}^{-1}) \cdot \ma{L}_{2}^{(p)}. \label{eq:SECSI_L3k}
\end{align}
Next, we stack the column vectors $\bm{s}_k=\mathrm{vec}\left\{\Delta \ma{S}_{3,k}^{\mathrm{rhs}}\right\}$ into the vector \textcolor{c77}{$\bm{s}$}, \textcolor{c44}{as defined in \cref{eq:BnAw_SECSI}}, and use the previous result to obtain
\begin{align}
\ma{s}
=  \ma{L}_{3} \cdot \ma{n}_{1} + \mathcal{O}(\Delta^2) , \label{eq:SECSI_s}
\end{align}
where  $\ma{L}_{3} = \left[ \ma{L}_{3}^{(1)} \quad \ma{L}_{3}^{(2)} \hdots \ma{L}_{3}^{(M_3)}\right] ^T$.
%
%\begin{align}
%\ma{K}_{3}
%&\triangleq
%\begin{bmatrix}
%\ma{K}_{3}^{(1)} \\
%\ma{K}_{3}^{(2)} \\
%\vdots \\
%\ma{K}_{3}^{(K)} \\
%\end{bmatrix}
%, \qquad
%\ma{L}_{3}
%\triangleq
%\begin{bmatrix}
%\ma{L}_{3}^{(1)} \\
%\ma{L}_{3}^{(2)} \\
%\vdots \\
%\ma{L}_{3}^{(K)} \\
%\end{bmatrix}. \label{eq:SECSI_KL3}
%\end{align}
%
This expression for $\ma{s}$ is used to expand  \cref{eq:SECSI_vec_Delta_T1} into
\begin{align}
\mathrm{vec}\left\{\Delta\ma{T}_{1}\right\} &= -\ma{A}^{+} \cdot \ma{B} \cdot \ma{s} \, \textcolor{c7}{ + \mathcal{O}(\Delta^2)} \nonumber\\
&= \underbrace{-\ma{A}^{+} \cdot \ma{B} \cdot \ma{L}_{3}}_{\ma{L}_{4}} \cdot \ma{n}_1 { + \mathcal{O}(\Delta^2)}. \label{eq:SECSI_vec_DeltaT}
\end{align}
Since $\mathrm{diag} \left( \mathrm{Ddiag} \left( \ma{Z} \right) \right) = \mathrm{diag} \left(  \ma{Z}  \right)$ for any matrix $\ma{Z}$, we use the matrix $\ma{W}_{(d)}^{\mathrm{red}}$ (defined in \cref{eq:SECSI_Wred}) and \cref{eq:SECSI_vec_Delta_D3k} to vectorize \cref{eq:SECSI_DeltaF3k} as
\small
\begin{align}
&\mathrm{vec}\big\{\Delta \ma{F}^{(3)}(k,:)\big\} \nonumber \\
&=
\mathrm{vec}\left\{\mathrm{diag} \left( \Delta \ma{D}_{3,k} \right)^{\mathrm{T}} \cdot \mathrm{Diag} \left( \ma{F}^{(3)}(p,:) \right) \right\} \nonumber\\
&=
\mathrm{vec}\left\{\mathrm{diag} \left( \mathrm{Ddiag} \left( \ma{T}_{1}^{-1} \cdot \Delta \ma{S}_{3,k}^{\mathrm{rhs}} \cdot \ma{T}_{1} \right) \right)^{\mathrm{T}}  \mathrm{Diag} \left( \ma{F}^{(3)}(p,:) \right) \right\} \nonumber\\
&\quad + \mathcal{O}(\Delta^2) \nonumber\\
&=
\mathrm{vec}\left\{\mathrm{diag} \left( \ma{T}_{1}^{-1} \cdot \Delta \ma{S}_{3,k}^{\mathrm{rhs}} \cdot \ma{T}_{1} \right)^{\mathrm{T}} \cdot \mathrm{Diag} \left( \ma{F}^{(3)}(p,:) \right) \right\}  + \mathcal{O}(\Delta^2) \nonumber\\
&=
\mathrm{vec}\left\{\mathrm{diag} \left[ \ma{T}_{1}^{-1} \cdot \Delta \ma{S}_{3,k}^{\mathrm{rhs}} \cdot \ma{T}_{1}  \cdot \mathrm{Diag} \left( \ma{F}^{(3)}(p,:) \right) \right] \right\}
+ \mathcal{O}(\Delta^2) \nonumber\\
&=
\ma{W}_{(d)}^{\mathrm{red}} \cdot \mathrm{vec}\left\{ \ma{T}_{1}^{-1} \cdot \Delta \ma{S}_{3,k}^{\mathrm{rhs}} \cdot \ma{T}_{1} \cdot \mathrm{Diag} \left( \ma{F}^{(3)}(p,:) \right)\right\} + \mathcal{O}(\Delta^2) \nonumber\\
&=
\ma{W}_{(d)}^{\mathrm{red}} \cdot \left( \mathrm{diag} \left( \ma{F}^{(3)}(p,:) \right) \cdot \ma{T}_{1}^{\mathrm{T}} \otimes \ma{T}_{1}^{-1}  \right)\cdot \mathrm{vec}\left\{  \Delta \ma{S}_{3,k}^{\mathrm{rhs}}  \right\} \nonumber\\ &\quad+ \mathcal{O}(\Delta^2) \nonumber\\
&=
\ma{W}_{(d)}^{\mathrm{red}} \left( \mathrm{diag} \left( \ma{F}^{(3)}(p,:) \right) \cdot \ma{T}_{1}^{\mathrm{T}} \otimes \ma{T}_{1}^{-1}  \right) \cdot \ma{L}_{3}^{(k)} \cdot \ma{n}_{1}+ \mathcal{O}(\Delta^2) \nonumber\\
&=  \ma{L}_{4}^{(k)} \cdot \ma{n}_1 + \mathcal{O}(\Delta^2) \, , \label{K4_general}
\end{align}
\normalsize
where
\begin{align}
\ma{L}_{4}^{(k)} &\triangleq
\ma{W}_{(d)}^{\mathrm{red}} \cdot   \left(\mathrm{diag} \left( \ma{F}^{(3)}(p,:) \right) \cdot  \ma{T}_{1}^{\mathrm{T}} \otimes \ma{T}_{1}^{-1}  \right) \cdot  \ma{L}_{3}^{(k)} \, .\label{eq:SECSI_L4}
\end{align}
\textcolor{c44} {The expression for each row of $\Delta \ma{F}^{(3)}$ \textcolor{c55}{in \cref{K4_general}} can be used to obtain an expression for the vectorization of $\Delta \ma{F}^{(3)}$. To this end, we use the permutation matrix $\ma{Q}_{(M_3,d)}$, defined in \cref{eq:Q_H-to-c}, to express $\mathrm{vec}\left\{\Delta \ma{F}^{(3)} \right\}$ as}\textcolor{c88} {
%
%HERE NEW DELTA C
%
\begin{align}
\mathrm{vec}\big\{\Delta \ma{F}^{(3)} \big\} &= \ma{Q}_{(M_3,d)}^{\mathrm{T}} \cdot \mathrm{vec}\left\{\Delta \ma{F}^{(3)^{\mathrm{T}}} \right\} \nonumber \\
&=
\ma{Q}_{(M_3,d)}^{\mathrm{T}} \cdot
\mathrm{stack}\{\mathrm{vec}\{\Delta \ma{F}^{(3)}(k,:) \} \} \nonumber \\
%&=
%\ma{Q}_{(M_3,d)}^{\mathrm{T}} \cdot
%\begin{bmatrix}
% \ma{L}_{4}^{(1)} \cdot \ma{n}_1 \\
% \ma{L}_{4}^{(2)} \cdot \ma{n}_1 \\
%\vdots \\
% \ma{L}_{4}^{(M_3)} \cdot \ma{n}_1 \\
%\end{bmatrix}   + \mathcal{O}(\Delta^2) \nonumber\\
%&=
% \ma{Q}_{(M_3,d)}^{\mathrm{T}} \cdot
%\begin{bmatrix}
%\ma{L}_{4}^{(1)} \\
%\ma{L}_{4}^{(2)} \\
%\vdots \\
%\ma{L}_{4}^{(M_3)} \\
%\end{bmatrix}
%\cdot \ma{n}_{1} \nonumber\\
%&\quad + \mathcal{O}(\Delta^2) \nonumber\\
&= \ma{L}_{5} \cdot \ma{n}_{1} + \mathcal{O}(\Delta^2) \, , \label{eq:SECSI_vec_DeltaF3}
\end{align}
where $k=1,2,\cdots, M_3$ }and
\begin{align}
\ma{L}_{5}
= \ma{Q}_{(M_3,d)}^{\mathrm{T}} \cdot
\mathrm{stack}\{\ma{L}_{4}^{(k)}\} \label{eq:SECSI_KL5}
\end{align}
\textcolor{c77}{Then} we insert \cref{eq:SECSI_vec_DeltaF3} in \cref{eq:SECSI_vec_rMSFEAll} to get
\begin{align}
\mathrm{vec}\left\{ \ma{F}^{(3)} - (\ma{F}^{(3)} +  \Delta \ma{F}^{(3)})\cdot\tilde{\ma{P}}^{(3)}\cdot\ma{P}_{\mathrm{opt}}^{(3)} \right\} \approx \ma{L}_{\bm{F}_3} \cdot \ma{n}_1, \label{eq:SECSI_vec_rMSFE3}
\end{align}
where
\textcolor{c88}{
\begin{align}
\ma{L}_{\bm{F}_3} &=
\left( \ma{I}_{d} \otimes \ma{F}^{(3)} \right) \cdot \left( \bm{K}_3^{-1} \otimes \ma{I}_{d} \right) \cdot \ma{W}_{(d)}\cdot \nonumber\\
&\quad \Big[\left( \ma{I}_{d} \otimes \ma{F}^{(1)^{\mathrm{H}}} \right)\cdot \ma{L}_{5} \Big] - \ma{L}_{5} \, . \label{eq:SECSI_Lc}
\end{align}
The} final closed-form $\mathrm{rMSFE}_{\bm{F}_3}$ expression can be approximated to
\begin{align}
\mathrm{rMSFE}_{\bm{F}_3} &= \frac{ \mathrm{tr}\left(\ma{L}_{\bm{F}_3} \cdot  \ma{R}_{\mathrm{nn}}^{(1)} \cdot \ma{L}^{\mathrm{H}}_{\bm{F}_3} \right)}{ \left\| \ma{F}^{(3)}\right\|_{\mathrm{F}}^{2}} \, . \label{eq:SECSI_rMSFE3}
\end{align}
Similarly,  vectorization of the perturbation in the first factor matrix estimate $\Delta \ma{F}^{(1)}$ is performed by using \cref{eq:SECSI_DeltaF1}, \cref{eq:SECSI_vec_DeltaT}, and \cref{eq:SECSI_vec_Delta_Ur} as
\begin{align}
&\mathrm{vec}\big\{\Delta \ma{F}^{(1)} \big\} = \mathrm{vec}\left\{ \Delta \ma{U}_{1}^{[\mathrm{s}]} \cdot \ma{T}_{1} +  \ma{U}_{1}^{[\mathrm{s}]} \cdot \Delta \ma{T}_{1} \right\}  + \mathcal{O}(\Delta^2) \nonumber\\
&= \left( \ma{I}_{d} \otimes \ma{U}_{1}^{[\mathrm{s}]} \right) \cdot \mathrm{vec}\left\{\Delta\ma{T}_{1}\right\} + \left( \ma{T}_{1}^{\mathrm{T}} \otimes \ma{I}_{M_1} \right) \cdot \mathrm{vec}\left\{\Delta \ma{U}_{1}^{[\mathrm{s}]} \right\} \nonumber\\
&\quad + \mathcal{O}(\Delta^2) \nonumber \\
&= \left( \ma{I}_{d} \otimes \ma{U}_{1}^{[\mathrm{s}]} \right) \cdot \ma{L}_{4} \cdot \ma{n}_1  + \left( \ma{T}_{1}^{\mathrm{T}} \otimes \ma{I}_{M_1} \right) \nonumber\\
&\quad \cdot \left(\ma{\Sigma}_1^{[\mathrm{s}]^{-1}} \ma{V}_1^{[\mathrm{s}]^{\mathrm{T}}} \otimes \ma{\Gamma}_1^{[\mathrm{n}]}\right) \cdot \ma{n}_1  + \mathcal{O}(\Delta^2) \nonumber\\
&=  \ma{L}_{6} \cdot \ma{n}_1 + \mathcal{O}(\Delta^2) \, , \label{eq:SECSI_vec_DeltaF1}
\end{align}
where
\begin{align}
\ma{L}_{6} &= \left( \ma{I}_{d} \otimes \ma{U}_{1}^{[\mathrm{s}]} \right) \cdot \ma{L}_{4} +  \left( \ma{T}_{1}^{\mathrm{T}} \cdot \ma{\Sigma}_1^{[\mathrm{s}]^{-1}} \ma{V}_1^{[\mathrm{s}]^{\mathrm{T}}} \otimes \ma{\Gamma}_1^{[\mathrm{n}]} \right) \, . \label{eq:SECSI_L6}
\end{align}
By inserting \cref{eq:SECSI_vec_DeltaF1} in \cref{eq:SECSI_vec_rMSFEAll}, we get
\begin{align}
&\mathrm{vec}\left\{ \ma{F}^{(1)} - (\ma{F}^{(1)} +  \Delta \ma{F}^{(1)})\cdot\tilde{\ma{P}}^{(1)}\cdot\ma{P}_{\mathrm{opt}}^{(1)} \right\} \approx  \ma{L}_{\bm{F}_1} \cdot \ma{n}_1, \label{eq:SECSI_vec_rMSFE1}
\end{align}
where
\begin{align}
\ma{L}_{\bm{F}_1} &=
 \left( \ma{I}_{d} \otimes \ma{F}^{(1)} \right) \cdot \left( \bm{K}_1^{-1} \otimes \ma{I}_{d} \right) \cdot \ma{W}_{(d)}\cdot \nonumber\\
 &\quad \Big[ \left( \ma{I}_{d} \otimes \ma{F}^{(1)^{\mathrm{H}}} \right)\cdot \ma{L}_{6} \Big] - \ma{L}_{6} \, . \label{eq:SECSI_La}
\end{align}
Now we get the closed-form expression for the first factor matrix $\mathrm{rMSFE}_{\bm{F}_1}$ by
\begin{align}
\mathrm{rMSFE}_{\bm{F}_1} &=\frac{ \mathrm{tr}\left(\ma{L}_{\bm{F}_1} \cdot  \ma{R}_{\mathrm{nn}}^{(1)} \cdot \ma{L}^{\mathrm{H}}_{\bm{F}_1} \right)}{ \left\| \ma{F}^{(1)}\right\|_{\mathrm{F}}^{2}} \, . \label{eq:SECSI_rMSFE1}
\end{align}

Finally, \textcolor{c44}{the} vectorization of the perturbation in the second factor matrix estimate $\Delta \ma{F}^{(2)}$ is achieved by using the obtained expressions for the vectorization of $\Delta \ma{F}^{(1)}$ (\cref{eq:SECSI_vec_DeltaF1}) and $\Delta \ma{F}^{(3)}$ (\cref{eq:SECSI_vec_DeltaF3}) to vectorize $\Delta \ma{F}^{(2)}$ as
\begin{align}
&\mathrm{vec}\left\{ \Delta \ma{F}^{(2)} \right\}  \label{eq:vec_DeltaB}\\
&=
\mathrm{vec}\Big\{[\ten{N}]_{(2)} \cdot \left[ \ma{F}^{(3)} \diamond \ma{F}^{(1)} \right]^{+\mathrm{T}}  -\ma{F}^{(2)} \cdot \Big[ \left(\Delta\ma{F}^{(3)} \diamond \ma{F}^{(1)} \right) \nonumber\\ &  + \left(\ma{F}^{(3)} \diamond \Delta\ma{F}^{(1)} \right)\Big]^{\mathrm{T}} \cdot \left(\ma{F}^{(3)} \diamond \ma{F}^{(1)} \right)^{+\mathrm{T}} \Big\}  + \mathcal{O}(\Delta^2) \nonumber\\
&=
\left( \left(\ma{F}^{(3)} \diamond \ma{F}^{(1)} \right)^{+} \otimes \ma{I}_{M_2} \right) \cdot \ma{P}_{(M_1,M_2,M_3)}^{(2)} \cdot \ma{n}_1
 \nonumber\\ &- \left( \left(\ma{F}^{(3)} \diamond \ma{F}^{(1)} \right)^{+} \otimes \ma{F}^{(2)} \right) \cdot \ma{Q}_{(M_1\cdot M_2, d)}    \nonumber\\ &\quad \cdot \left[ \mathrm{vec}\left\{ \Delta\ma{F}^{(3)} \diamond \ma{F}^{(1)} \right\}  +  \mathrm{vec}\left\{ \ma{F}^{(3)} \diamond \Delta\ma{F}^{(1)} \right\}  \right] + \mathcal{O}(\Delta^2). \nonumber
\end{align}
%
%Since \cref{eq:vec_DeltaB} involves Khatri-Rao \textcolor{c44}{products},  we present a theorem that addresses the vectorization of Khatri-Rao {products}.
\textcolor{c88}{The following relation for two matrices $\ma{X} = [\ma{x}_1, \ma{x}_2, \dots, \ma{x}_d] \in \compl^{M_1 \times d}$ and $\ma{Y} = [\ma{y}_1, \ma{y}_2, \dots, \ma{y}_d] \in \compl^{M_2 \times d}$ can easily be verified for the vectorization of Khatri-Rao products.
\begin{align}
\mathrm{vec}\left\{ \ma{X} \diamond \ma{Y} \right\} = \ma{G}(\ma{X}, M_2 ) \cdot \mathrm{vec}\{\ma{Y}\}
	= \ma{H}(\ma{Y}, M_1 ) \cdot \mathrm{vec}\{\ma{X}\}, \label{teo:KRP}
\end{align}
where
	\begin{align*}
	\ma{G}(\ma{X}, M_2 ) &\triangleq
	\begin{bmatrix}
	\left(\ma{x}_1 \otimes \ma{I}_{M_2}\right)\cdot \left(\ma{e}_{d, 1}^{\mathrm{T}} \otimes \ma{I}_{M_2}\right) \\
%	\left(\ma{x}_2 \otimes \ma{I}_{M_2}\right)\cdot \left(\ma{e}_{d, 2}^{\mathrm{T}} \otimes \ma{I}_{M_2}\right) \\
	\vdots \\
	\left(\ma{x}_d \otimes \ma{I}_{M_2}\right)\cdot \left(\ma{e}_{d, d}^{\mathrm{T}} \otimes \ma{I}_{M_2}\right) \\
	\end{bmatrix}
	\\
	\ma{H}(\ma{Y}, M_1 ) &\triangleq
	\begin{bmatrix}
	\left(\ma{I}_{M_1} \otimes \ma{y}_{1}\right)\cdot \left(\ma{e}_{d, 1}^{\mathrm{T}} \otimes \ma{I}_{M_1}\right) \\
%	\left(\ma{I}_{M_1} \otimes \ma{y}_{2}\right)\cdot \left(\ma{e}_{d, 2}^{\mathrm{T}} \otimes \ma{I}_{M_1}\right) \\
	\vdots \\
	\left(\ma{I}_{M_1} \otimes \ma{y}_{d}\right)\cdot \left(\ma{e}_{d, d}^{\mathrm{T}} \otimes \ma{I}_{M_1}\right) \\
	\end{bmatrix}.
	\end{align*}
In order to apply the relation in \cref{teo:KRP}} to  \cref{eq:vec_DeltaB}, we define $\ma{f}_{l}^{(r)}$ to be the $l$-th column of the $r$-th factor matrix $\ma{F}^{(r)}$ which implies that $\ma{f}_{l}^{(r)} = \ma{F}^{(r)}(:,l)$. Therefore, by applying the relation in \cref{teo:KRP} to $\mathrm{vec}\left\{ \Delta\ma{F}^{(3)} \diamond \ma{F}^{(1)} \right\}$ we get
\begin{align}
\mathrm{vec}\left\{ \Delta \ma{F}^{(3)} \diamond \ma{F}^{(1)} \right\}
&= \ma{H}(\ma{F}^{(1)}, M_3 ) \cdot \mathrm{vec}\left\{ \Delta\ma{F}^{(3)} \right\}. \label{eq:KRP_1}
\end{align}
In the same manner, we apply \textcolor{c88}{the relation in \cref{teo:KRP}} to  $\mathrm{vec}\left\{ \ma{F}^{(3)} \diamond \Delta\ma{F}^{(1)} \right\}$, \textcolor{c44}{leading} to
\begin{align}
\mathrm{vec}\left\{\ma{F}^{(3)} \diamond  \Delta\ma{F}^{(1)} \right\}
&=\ma{G}(\ma{F}^{(3)}, M_1 )\cdot \mathrm{vec}\left\{ \Delta\ma{F}^{(1)} \right\}.\label{eq:KRP_2}
\end{align}
Furthermore, we use the results from \cref{eq:KRP_1} and \cref{eq:KRP_2}, \textcolor{c7}{as well as from  \cref{eq:SECSI_vec_DeltaF1} and \cref{eq:SECSI_vec_DeltaF3}}, in \cref{eq:vec_DeltaB} to get
\begin{align}
\mathrm{vec}\left\{ \Delta \ma{F}^{(2)} \right\} & = \ma{L}_{7} \cdot \ma{n}_1  + \mathcal{O}(\Delta^2) \, , \label{eq:SECSI_vec_DeltaF2}
\end{align}
where
\begin{align}
\ma{L}_7 &= \left( \left(\ma{F}^{(3)} \diamond \ma{F}^{(1)} \right)^{+} \otimes \ma{I}_{M_2} \right) \cdot \ma{P}_{(M_1,M_2,M_3)}^{(2)} \nonumber\\&\quad - \left( \left(\ma{F}^{(3)} \diamond \ma{F}^{(1)} \right)^{+} \otimes \ma{F}^{(2)} \right) \cdot \ma{Q}_{(M_1\cdot M_3, d)}  \nonumber \\&\quad
\cdot \left( \ma{H}(\ma{F}^{(1)}, M_3 ) \cdot \ma{L}_{5} +
\ma{G}(\ma{F}^{(3)}, M_1 ) \cdot \ma{L}_{6} \right).\label{eq:SECSI_L7}
\end{align}
By using the obtained expressions for \cref{eq:SECSI_vec_DeltaF2} in \cref{eq:SECSI_vec_rMSFEAll}, we get
\begin{align}
&\mathrm{vec}\left\{ \ma{F}^{(2)} - (\ma{F}^{(2)} +  \Delta \ma{F}^{(2)})\cdot\tilde{\ma{P}}^{(2)}\cdot\ma{P}_{\mathrm{opt}}^{(2)} \right\} \approx  \ma{L}_{\bm{F}_2} \cdot \ma{n}_1, \label{eq:SECSI_vec_rMSFE2}
\end{align}
where
\begin{align}
\ma{L}_{\bm{F}_2} &=
\left( \ma{I}_{d} \otimes \ma{F}^{(2)} \right) \cdot \left( \bm{K}_2^{-1} \otimes \ma{I}_{d} \right) \cdot \ma{W}_{(d)}\cdot \nonumber\\
&\quad \Big[ \left( \ma{I}_{d} \otimes \ma{F}^{(2)^{\mathrm{H}}} \right)\cdot \ma{L}_{7} \Big] - \ma{L}_{7} \, . \label{eq:SECSI_Lb}
\end{align}
Finally, the closed-form  expression for the \textcolor{c7}{second} factor matrix $\mathrm{rMSFE}_{\bm{F}_2}$ is approximated by
\begin{align}
\mathrm{rMSFE}_{\bm{F}_2} &= \frac{ \mathrm{tr}\left(\ma{L}_{\bm{F}_2} \cdot  \ma{R}_{\mathrm{nn}}^{(1)} \cdot \ma{L}^{\mathrm{H}}_{\bm{F}_2} \right)}{ \left\| \ma{F}^{(2)}\right\|_{\mathrm{F}}^{2}}. \label{eq:SECSI_rMSFE2}
\end{align}
Note that we have \textcolor{c7}{performed} this \textcolor{c55}{first order perturbation} analysis on the 3-mode rhs \textcolor{c7}{solution} as shown in Table \ref{alg:SECSI}, where $\mathrm{rMSFE}^{(1)} \approx \mathrm{rMSFE}_{\bm{F}_1}$, $\mathrm{rMSFE}^{(2)} \approx \mathrm{rMSFE}_{\bm{F}_2}$, and $\mathrm{rMSFE}^{(3)} \approx \mathrm{rMSFE}_{\bm{F}_3}$. Nevertheless, the 1-mode and 2-mode rhs estimates can be obtained by applying the SECSI framework on the 3-mode (\textcolor{c7}{i.e.,} Table \ref{alg:SECSI}) to the permuted versions of the input tensor $\ten{X}$. In the same manner, the $\mathrm{rMSFE}^{(r)}$ for the 1-mode and 2-mode rhs estimates can also be approximated, by applying this performance analysis framework on the permuted versions of the noiseless input tensor $\ten{X}_0$, as shown in \textcolor{c7}{Table} \ref{tab:PA_Modes}.
\begin{table}[h!]
	\centering
	\begin{tabular}{|c|c|c|c|}
		\hline
		& 3-mode & 2-mode & 1-mode \\
		\hline
		input & $\ten{X}$ & $\mathrm{perm}(\ten{X},[2,3,1])$ & $\mathrm{perm}(\ten{X},[1,3,2])$ \\
		\hline
		$\mathrm{rMSFE}^{(1)}$ & $\mathrm{rMSFE}_{\bm{F}_1}$  & $\mathrm{rMSFE}_{\bm{F}_3}$  & $\mathrm{rMSFE}_{\bm{F}_1}$ \\
		$\mathrm{rMSFE}^{(2)}$ & $\mathrm{rMSFE}_{\bm{F}_2}$  & $\mathrm{rMSFE}_{\bm{F}_1}$  & $\mathrm{rMSFE}_{\bm{F}_3}$ \\
		$\mathrm{rMSFE}^{(3)}$ & $\mathrm{rMSFE}_{\bm{F}_3}$  & $\mathrm{rMSFE}_{\bm{F}_2}$  & $\mathrm{rMSFE}_{\bm{F}_2}$ \\
		\hline
	\end{tabular}
	\caption{$\mathrm{rMSFE}^{(r)}$ from  \cref{eq:SECSI_rMSFEdef} approximation, using eqs. \eqref{eq:SECSI_rMSFE1} \eqref{eq:SECSI_rMSFE2} \eqref{eq:SECSI_rMSFE3}, for the different $r$-modes of SECSI}
	\label{tab:PA_Modes}
\end{table}

\section{Extensions}
\subsection{Extension to the 3-mode LHS}\label{sec:SECSI_LHS}
In this section, we extend the obtained results for the rhs estimates to the lhs estimates. We first describe the main differences between the rhs and lhs third mode estimates, and later redefine the performance analysis framework accordingly. Note that, since there are no significant changes for computing the lhs estimates, when compared to the rhs case, we can reuse most of the expressions obtained in Section \ref{sec:SECSI_Performance} for this extension to the lhs.

For computing the lhs estimates, the matrices $\ma{S}_{3,k}^{\mathrm{lhs}}$ from \cref{eq:LHS} are used as input to the JEVD problem, instead of the matrices $\ma{S}_{3,k}^{\mathrm{rhs}}$ from \cref{eq:RHS}. This leads to a redefinition of the matrices $\ma{K}_3^{(k)}$ and $\ma{L}_3^{(k)}$ that appear in the vectorization of $\ma{S}_{3,k}^{\mathrm{rhs}}$ (\cref{eq:SECSI_vec_Delta_S3kRHS}), in the \textcolor{c44}{rhs} case, and now are used to express  the vectorization of $\ma{S}_{3,k}^{\mathrm{lhs}}$ as \textcolor{c44}{$\mathrm{vec}\left\{ \ma{S}_{3,k}^{\mathrm{lhs}} \right\} =  \ma{L}_{3,\text{lhs}}^{(k)}\cdot \ma{n}_1$}, where
\begin{align}
\ma{L}_{3,\text{lhs}}^{(k)} &=  \bm{Q}_{d,d}^{T}\cdot
(\ma{I}_{d}\otimes \ma{S}_{3,p}^{-1}  ) \cdot  \ma{L}_{2}^{(k)} - (\ma{S}_{3,k}^{T}\cdot \ma{S}_{3,p}^{-\mathrm{T}} \otimes \ma{S}_{3,p}^{-1}) \cdot \ma{L}_{2}^{(p)}. \label{eq:SECSI_L3LHS}
\end{align}
\textcolor{c55}{The calculation of a JEVD for the slices $\hat{\ma{S}}_{3,k}^{\mathrm{lhs}}$ in \cref{eq:LHS} results in the estimation of the transformation matrix $\ma{T}_2$, instead of $\ma{T}_1$.} Therefore, this leads to a different \textcolor{c44}{way of estimating the factor matrices, as defined in Table \ref{alg:SECSI}}. For instance, $\ma{F}^{(1)}$ is estimated via a LS-fit and $\ma{F}^{(2)}$ is now estimated from $\ma{T}_2$. \textcolor{c111}{These changes lead to a redefinition of  $\ma{L}_6$ {that is similar to the definition of  $\bm{L}_7$ for the rhs (\cref{eq:SECSI_L7})}. Therefore, we refer to it as}
\begin{align}
\ma{L}_{7,\text{lhs}} &= \left( \ma{I}_{d} \otimes \ma{U}_{2}^{[\mathrm{s}]} \right) \cdot \ma{L}_{4} +  \left( \ma{T}_{2}^{\mathrm{T}} \cdot \ma{\Sigma}_2^{[\mathrm{s}]^{-1}} \ma{V}_2^{[\mathrm{s}]^{\mathrm{T}}} \otimes \ma{\Gamma}_2^{[\mathrm{n}]} \right) \, . \label{eq:SECSI_L7LHS}
\end{align}
In the same manner,  $\ma{L}_7$ for the rhs is also redefined for the lhs as
\begin{align}
\ma{L}_{6,\text{lhs}}&= \left( \left(\ma{F}^{(2)} \diamond \ma{F}^{(3)} \right)^{+} \otimes \ma{I}_{M_1} \right) \cdot \ma{P}_{(M_1,M_2,M_3)}^{(1)} \nonumber\\&\quad - \left( \left(\ma{F}^{(2)} \diamond \ma{F}^{(3)} \right)^{+} \otimes \ma{F}^{(1)} \right) \cdot \ma{Q}_{(M_2\cdot M_3, d)}  \nonumber \\&\quad
\cdot  \left( \ma{G}(\ma{F}^{(2)}, M_3 )  \cdot \ma{L}_{5} +
\ma{H}(\ma{F}^{(3)}, M_2 ) \cdot \ma{L}_{7} \right). \label{eq:SECSI_L6LHS}
\end{align}
\begin{table}[h!]
\centering
%\scriptsize
\begin{tabular}{|c|l|c|c|}
\hline
 compute & RHS              & LHS  & $\mathrm{vec}\{\cdot\}$                                                \\ \hline
 $\ma{L}_0$        &               \multicolumn{2}{c|}{\cref{eq:SECSI_vec_Delta_S}}     & $[\Delta \ten{S}^{[\mathrm{s}]}]_{(1)} $                                                 \\
 $\ma{L}_1$                       & \multicolumn{2}{c|}{ \cref{eq:SECSI_L1}}   &   $[\Delta \ten{S}_{3}^{[\mathrm{s}]}]_{(1)}$                                \\
 $\ma{L}_2^{(k)}$           & \multicolumn{2}{c|}{ \cref{eq:SECSI_L2k}} & $\Delta \ma{S}_{3,k}$                                 \\ \cline{2-3}
 $\ma{L}_3^{(k)}$           &  \cref{eq:SECSI_L3k} &  \cref{eq:SECSI_L3LHS}& $\Delta \ma{S}_{3,k}^{\mathrm{rhs}}$, $\Delta \ma{S}_{3,k}^{\mathrm{lhs}}$\\ \cline{2-3}
  $\ma{L}_3$                  &\multicolumn{2}{c|}{\cref{eq:SECSI_s}}&                                                           \\
  $\ma{L}_4^{(k)}$           & \multicolumn{2}{c|}{\cref{eq:SECSI_L4}}   &   $\Delta \ma{F}^{(3)}(k,:)$                                \\
 $\ma{L}_5$                       & \multicolumn{2}{c|}{\cref{eq:SECSI_KL5}}   &    $\Delta \ma{F}^{(3)}$                                                         \\\cline{2-3}
  $\ma{L}_6$                       & \cref{eq:SECSI_L6}   &  \cref{eq:SECSI_L6LHS}& $\Delta \ma{F}^{(1)}$\\ \cline{2-3}
 $\ma{L}_7$                       & \cref{eq:SECSI_L7}   &  \cref{eq:SECSI_L7LHS} & $\Delta \ma{F}^{(2)}$\\ \cline{2-3}
 $\ma{L}_{\bm{F}_1}$ & \multicolumn{2}{c|}{ \cref{eq:SECSI_La}}   &     \cref{eq:SECSI_vec_rMSFE1}                              \\
  $\ma{L}_{\bm{F}_2}$ & \multicolumn{2}{c|}{ \cref{eq:SECSI_Lb}}   &          \cref{eq:SECSI_vec_rMSFE2}                           \\
 $\ma{L}_{\bm{F}_3}$ & \multicolumn{2}{c|}{\cref{eq:SECSI_Lc}}     &      \cref{eq:SECSI_vec_rMSFE3}                             \\
 $\mathrm{rMSFE}^{(1)}$                       & \multicolumn{2}{c|}{$\mathrm{rMSFE}_{\bm{F}_1}$, \cref{eq:SECSI_rMSFE1}}     &                          \\
 $\mathrm{rMSFE}^{(2)}$                       & \multicolumn{2}{c|}{$\mathrm{rMSFE}_{\bm{F}_2}$,  \cref{eq:SECSI_rMSFE2}}&                              \\
 $\mathrm{rMSFE}^{(3)}$                       & \multicolumn{2}{c|}{$\mathrm{rMSFE}_{\bm{F}_3}$, \cref{eq:SECSI_rMSFE3}} &                              \\ \hline
\end{tabular}
	\caption{3-mode rhs and lhs performance analysis of SECSI. The matrix  $\ma{L}_6$ is always used for $\Delta \ma{F}^{(1)}$. Similarly  $\ma{L}_7$  is always used for $\Delta \ma{F}^{(2)}$.}
		\label{tab:PA_RHS}
\end{table}
Finally, a summary of the third mode rhs and lhs performance analysis is shown in Table \ref{tab:PA_RHS}. Note that only steps 4 ( $\ma{L}_3^{(k)}$ ), 8 ( $\ma{L}_6$), and 9 ( $\ma{L}_7$ ) are changed from the rhs to the lhs performance analysis. Moreover the lhs $\mathrm{rMSFE}^{(r)}$ expressions in the \textcolor{c55}{other} modes (i.e., 1-mode and 2-mode) can be approximated in the same manner as in Section \ref{sec:SECSI_Performance} by applying this \textcolor{c55}{first order performance analysis} to permuted versions of the noiseless tensor $\ten{X}_0$, as shown in \textcolor{c7}{Table} \ref{tab:PA_Modes}.

\subsection{Extension to the underdetermined (degenerate) case} \label{non-degenerate}
\textcolor{c55}{In this work, we have assumed the non-degenerate case, but the results can also be applied to the underdetermined (degenerate) case. The decomposition is underdetermined in mode $n$ if $d>M_n$. The SECSI framework is still applicable if the problem is underdetermined in up to one mode \textcolor{c66}{\cite{SECSI}}. For example, let \textcolor{c55}{the} tensor rank of the noiseless tensor be greater than any one of the dimension, i.e., $M_1<d \leq \text{min}\{M_2,M_3\}$. In this case, $\bm{F}^{(1)} \in \compl^{M_1 \times d}$ is a flat matrix. But $\bm{U}_1$ has the dimension of $M_1 \times M_1$. Therefore, $\bm{T}_1$ does not exist. However, the diagonalization problems $\bm{S}^\text{rhs}_{1,k}$ and $\bm{S}^\text{lhs}_{1,k}$ in Fig. \ref{overview_SECSI} can still be solved and yield two estimates for $\bm{F}^{(1)}$, one for $\bm{F}^{(2)}$, and one for $\bm{F}^{(3)}$.} \looseness=-1
\section{A Performance Analysis based Factor Matrix Estimates Selection Scheme} \label{sec:SECSI_PAS}
\begin{figure*}[t!]
	\centering
	\begin{subfigure}[b]{0.45\textwidth}
		\raggedright
		%\resizebox{0.95\columnwidth}{!}{\input{figures/DC_Comparison.tex}}
		\includegraphics[width=.95\linewidth]{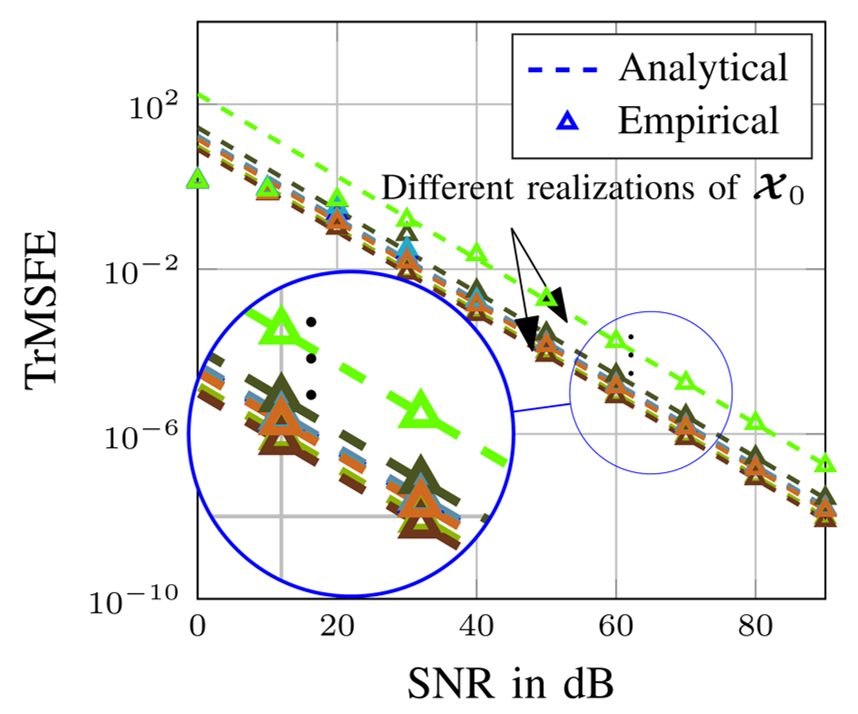}
		\caption{Scenario I}
		\label{fig:SECSI_PA_Scenario5}
	\end{subfigure}
	~
	\begin{subfigure}[b]{0.45\textwidth}
		\raggedleft
		%\resizebox{0.95\columnwidth}{!}{\input{figures/DC_Comparison.tex}}
		\includegraphics[width=.9\linewidth]{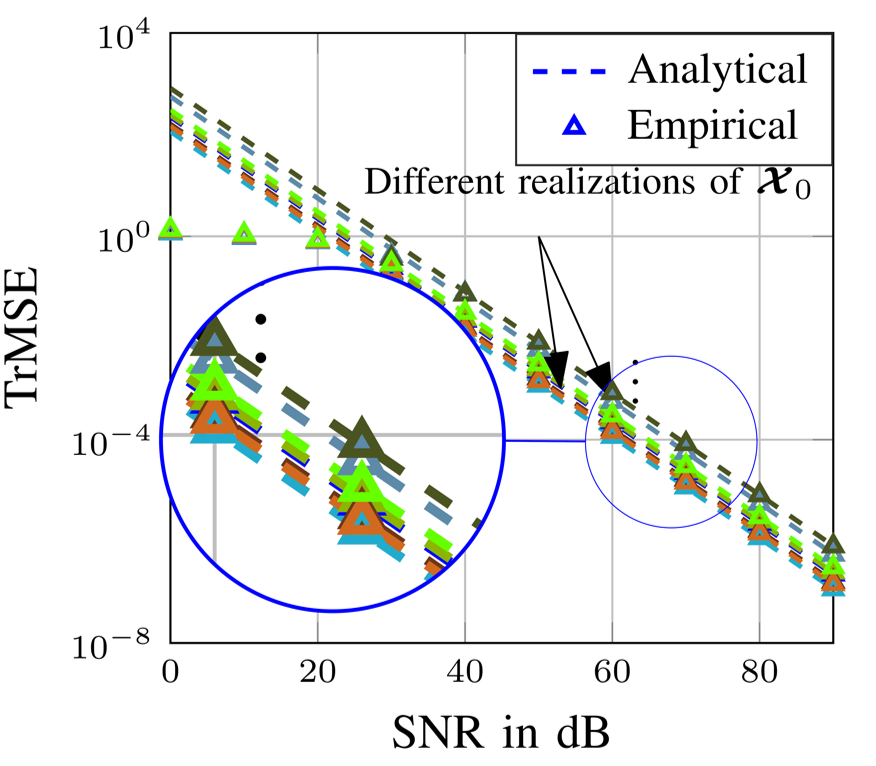}
		\caption{Scenario II}
		\label{fig:SECSI_PA_Scenario6}
	\end{subfigure}
	\caption{TrMSFE of the 3-mode rhs estimates using the SECSI framework in scenarios I and II}
	\label{fig:SECSI_PA_Scenario56}
	\vspace*{-0.5cm}
\end{figure*}
In this section we propose a new estimates selection scheme for the SECSI framework. We refer to this new scheme as Performance Analysis based Selection (PAS). Unlike other selection schemes \textcolor{c55}{for the final matrix estimates} proposed in \cite{SECSI}, such as \textcolor{c7}{CON-PS (conditioning number - paired solutions), REC-PS (reconstruction error - paired solutions), and BM (best  matching)}, which perform a heuristic selection \textcolor{c66}{or an exhaustive} search (BM) to select the \textcolor{c55}{final} estimates, this new selection scheme allows us to directly select one estimate per factor matrix (instead of performing a search among all the possible estimate combinations). \textcolor{c88}{For instance, the BM scheme tests all possible combinations of the estimates of the loading matrices in an exhaustive search. It therefore requires the reconstruction of $(R \cdot (R - 1))^R$ tensors \cite{SECSI}, which grows rapidly with $R$ and already reaches 216 for $R = 3$, whereas the PAS scheme requires only 18 estimates for $R = 3$.}
%Nevertheless, this advantage comes at the expense of memory consumption, since we need to handle matrices of size $M \times M$ for the proposed PAS scheme \footnote{Note that $M = M_1 \cdot M_2 \cdot M_3$.}.

All the analytical expressions in the performance analysis are computed from the noiseless estimates (such as $\ten{X}_0$, $\ma{U}_r^{[\mathrm{s}]}$, $\ten{S}^{[\mathrm{s}]}$, etc.). For the PAS scheme, we approximate the noiseless quantities with the noisy ones as $\ten{X}_0 \approx \hat{\ten{X}}$, $\ma{U}_r^{[\mathrm{s}]} \approx \hat{\ma{U}}_r^{[\mathrm{s}]}$, $\ten{S}^{[\mathrm{s}]} \approx \hat{\ten{S}}^{[\mathrm{s}]}$, etc. Then, we use the corresponding performance analysis \textcolor{c55}{and} assume perfect knowledge of the second order moments of the noise (i.e., $\ma{R}_{\mathrm{nn}}^{(r)}$ and $\ma{C}_{\mathrm{nn}}^{(r)}$), to estimate $\mathrm{rMSFE}^{(r)}$ for all $r=1,2,3$, in all the $r$-mode lhs and rhs solutions. Finally, we select the estimates that correspond to the \textcolor{c7}{smallest} estimated $\mathrm{rMSFE}^{(r)}$ values, for all $r=1,2,3$. Note that estimating the noise variance has no effect on this scheme, since all the estimated $\mathrm{rMSFE}^{(r)}$ values are multiplied by the same $\sigma_N^2$ factor.

\section{Simulation Results}
\label{sim}
%
%
%
%\begin{figure}[t]
%	\centering
%		%\resizebox{0.95\columnwidth}{!}{\input{figures/DC_Comparison.tex}}
%		\input{figures/SECSI_Corr_A_SNR50dB.tex}
%		\caption{1-st factor matrix $\ma{F}^{(1)}$}
%		\label{fig:SECSI_Correlation}
%	\caption{$\mathrm{TrMSFE}$ (eq. \eqref{eq:SECSI_rMSFEdef2}), in the third mode RHS, against additive noise correlation, for Scenario VII.}
%\end{figure}
%
In this section, we validate the resulting analytical expressions with empirical simulations. First we compare the results for the proposed analytical framework to the empirical results. Then, we compare the performance of \textcolor{c55}{the} PAS selection scheme with other estimates selection schemes.
\vspace*{-0.5cm}
%\textcolor{c33}{\in \compl^{M_1 \times M_2 \times M_3}If the CP is degenerate in one particular mode $r$ then all SMDs
%connected with $\bm{F}^(r)$ cannot be solved anymore. We refer to the CP decomposition as degenerate in the $r$-th mode if $ P_r < d$, i.e., the $r$-rank of the desired signal component $\ten{X}_0$ is less
%than the tensor rank $d$. An $r$-mode degeneracy occurs if the $r$-mode
%loading matrix $\bm{F}^(r)$ does not have full column rank. This is always
%the case for $M_r < d$. For $d \leq M_r $, degeneracies can be caused by
%linear dependencies among the columns of $\bm{F}^(r)$}.
\subsection{Performance Analysis Simulations}
We define three simulation scenarios, where the properties of the noiseless tensor $\ten{X}_0$ and the number of trials used for every point of the simulation are stated in Table \ref{tab:SECSI_ScenariosSNR}.
\begin{table}[h!]
	\begin{center}
		\begin{tabular}{|c|c|c|c|c|}
			\hline
			Scenario & Size & $d$ & $\ma{F}^{(r)}$ Correlation & Trials \\
			\hline
%			I & $10 \times 10 \times 10$ & 5 & (none) & $10,000$\\
%			II & $10 \times 10 \times 4$ & 4 & (none) & $10,000$\\
%			III & $5 \times 5 \times 30$ & 4 & (none) & $10,000$\\
			I & $5 \times 5 \times 5$ & 4 & (none) & $10,000$\\
			II & $5 \times 8 \times 7$ & 4 & $r=1$ & $10,000$\\
		    III & $3 \times 15 \times 70$ & 3 & (none) & $5,000$\\
			\hline
		\end{tabular}
		\caption{ Scenarios for varying SNR simulations.}
		\label{tab:SECSI_ScenariosSNR}
	\end{center}
\end{table}

%Note that, for some simulations we use Monte Carlo trials (where new $\ten{X}_0$ and $\ten{N}$ are randomly generated at every trial), while for others $\ten{X}_0$ is fixed along the experiment and only $\ten{N}$ is randomly generated at every trial.
%
%\begin{equation*}
%\sigma_N^2 = \frac{\|  \ten{X}_0 \|_{\mathrm{H}}^2}{ \text{SNR}\cdot M} \, .
%\end{equation*}
%
\textcolor{c55}{In scenario I and scenario III, we have used real-valued tensors while complex-valued tensors are used for scenario II. Moreover, to further illustrate the robustness of our performance analysis results, we have used different JEVDs algorithms for both scenarios. We employ \textcolor{c66}{the} JDTM algorithm \cite{JDTM} for scenario I while \textcolor{c66}{the} coupled JEVD \cite{Coupled} is  employed for scenario II. Note that both algorithms are based on the indirect LS cost function \cite{Icassp_17}.}
For every trial, the noise tensor $\ten{N}$ is randomly generated and has zero-mean Gaussian entries of variance $\sigma_N^2 =  \| \ten{X}_0 \|_{\mathrm{H}}^2/ (\text{SNR}\cdot M)$. Moreover, the noiseless tensor $\ten{X}_0$  is  fixed \textcolor{c77}{in each} the experiment and has zero-mean uncorrelated Gaussian entries on its factor matrices $\ma{F}^{(r)}$. We plot several realizations of the experiments (therefore different $\ten{X}_0$) on top of each other, to  \textcolor{c55}{provide} better insights about the performance of the tested algorithms. This simulation setup is selected, since the derived analytical expression depicts the rMSFE for a known noiseless tensor $\ten{X}_0$ over several noise trials. Moreover, every realization of the noiseless tensor $\ten{X}_0$ is \textcolor{c77}{given by}  $\ten{X}_0 = \ten{I}_{3,d} \times_1 \ma{F}^{(1)} \times_2 \ma{F}^{(2)} \times_3 \ma{F}^{(3)}$, where the factor matrices $\ma{F}^{(r)} \in \real^{M_r \times d}$ have uncorrelated Gaussian entries for all $r=1,2,3$ for scenario I. %, or have one factor matrix fixed along the experiment and the other two with zero-mean uncorrelated Gaussian entries.
Furthermore,  for scenario II, we also introduce correlation in the factor matrix $\ma{F}^{(1)}$. In this scenario, the factor matrices $\ma{F}^{(2)}$ and $\ma{F}^{(3)}$ are randomly drawn but $\ma{F}^{(1)}$ is fixed along the experiments as
\begin{equation}\label{eq:SECSI_Scenario5_F1}
\ma{F}^{(1)} =
\begin{bmatrix}
1		&1		&1		&1\\
1		&0.95	&0.95	&0.95\\
1		&0.95	&1		&1\\
1		&1		&0.95	&1\\
0.95	&1		&1		&1
\end{bmatrix}\, .
\end{equation}
We \textcolor{c55}{depict the} results in the form of \textcolor{c55}{the} Total rMSFE (TrMSFE) since it reflects the total factor matrix estimation accuracy of the tested algorithms. The TrMSFE is defined as $\mathrm{TrMSFE} = \sum_{r=1}^{3} \mathrm{rMSFE}^{(r)}$
%\begin{equation}\label{eq:TrMSFE_def}
%\mathrm{TrMSFE} = \sum_{r=1}^{3} \mathrm{rMSFE}^{(r)} \, ,
%\end{equation}
%
where $\mathrm{rMSFE}^{(r)}$ is the same as in \cref{eq:SECSI_rMSFEdef}.

\textcolor{c55}{It is evident from the results in Fig. \ref{fig:SECSI_PA_Scenario56} that the analytical results from the proposed first-order perturbation analysis match well with the Monte-Carlo simulations for both real and complex valued tensors. Moreover, the results also \textcolor{c66}{show} an excellent match to the empirical simulations for scenario II where we have an asymmetric tensor and also have high correlation in the first factor matrix.}
\begin{figure*}[t!]
	\centering
	\begin{subfigure}[t]{0.27\textwidth}
		\raggedleft
		%\resizebox{0.95\columnwidth}{!}{\input{figures/DC_Comparison.tex}}
		\includegraphics[width=.95\linewidth]{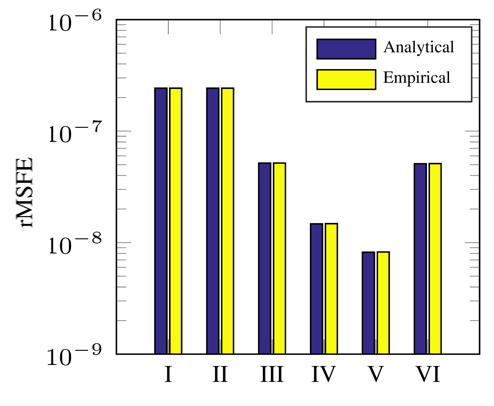}
		\caption{6 estimates of $\bm{F}^{(1)}$}
		\label{fig:6estF1}
	\end{subfigure}
	\begin{subfigure}[t]{0.27\textwidth}
		\raggedleft
		%\resizebox{0.95\columnwidth}{!}{\input{figures/DC_Comparison.tex}}
		\includegraphics[width=.95\linewidth]{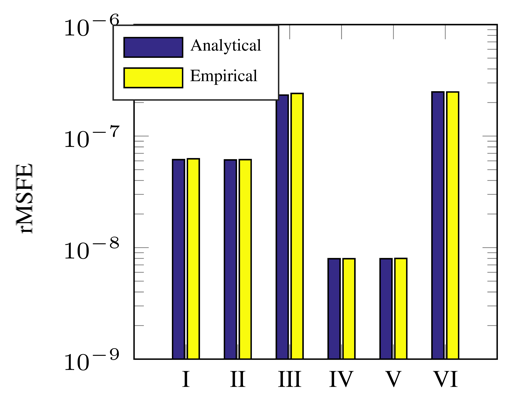}
		\caption{6 estimates of $\bm{F}^{(2)}$}
		\label{fig:6estF2}
	\end{subfigure}
		\begin{subfigure}[t]{0.27\textwidth}
			\raggedleft
			%\resizebox{0.95\columnwidth}{!}{\input{figures/DC_Comparison.tex}}
			\includegraphics[width=.95\linewidth]{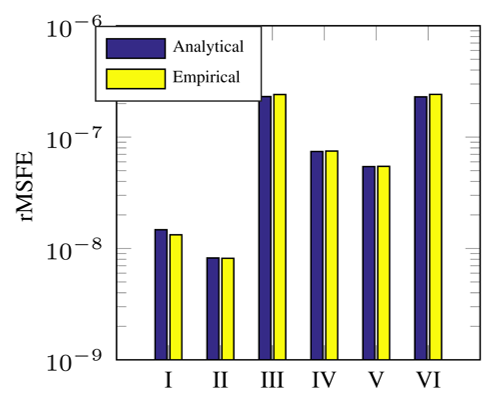}
			\caption{6 estimates of $\bm{F}^{(3)}$}
			\label{fig:6estF3}
		\end{subfigure}
	\caption{6 estimates for each of the factor matrix for scenario III and for a SNR = 50 dB.}\label{fig:6estF}
\end{figure*}

In Fig. \ref{fig:6estF}, we show the analytical and empirical results for the asymmetric scenario III where the results are shown for 6 estimates of each factor matrix obtained by solving all of the $r$-modes for a SNR = 50 dB. The estimates are arranged according to Fig. \ref{overview_SECSI}. The results show an excellent match for emiprical and analytical results for all of 6 estimates for each factor matrix. Moreover, the estimates $\bm{F}_\mathrm{IV}^{(1)}$, $\bm{F}_\mathrm{V}^{(1)}$, $\bm{F}_\mathrm{IV}^{(2)}$, $\bm{F}_\mathrm{V}^{(2)}$, $\bm{F}_\mathrm{I}^{(3)}$, and $\bm{F}_\mathrm{II}^{(3)}$ are the best estimates for each of the factor matrices. Note that all of these estimates are obtained by solving the JEVD problem  for \textcolor{c77}{the 3-mode in \cref{eq:RHS} and \cref{eq:LHS}. This results from the fact that JEVD for this mode has the highest number of slices (70) which results in a better accuracy.} \looseness=-1

\subsection{SECSI framework based on PAS scheme }
%Then, in Table \ref{tab:SECSI_ScenariosFixed}, we define another 4 scenarios for fixed SNR simulations, such as rMSFE against additive noise correlation at a fixed SNR. Like in Scenario V of table \ref{tab:SECSI_ScenariosSNR}, $\ma{F}^{(1)}$ is fixed along Scenario VI experiments as
%%
%%
%\begin{equation}\label{eq:SECSI_Scenario6_F1}
%\ma{F}^{(1)} =
%\begin{bmatrix}
%1		&1		&1\\
%1		&0.95	&0.95\\
%1		&0.95	&1\\
%1		&1		&0.95\\
%0.95	&1		&1
%\end{bmatrix}\, .
%\end{equation}
%%
In this section, we compare the \textcolor{c55}{performance} of the PAS scheme \textcolor{c55}{with other selection schemes}, such as the CON-PS, REC-PS, and BM schemes from \cite{SECSI} for two simulation scenarios in Table \ref{tab:SECSI_ScenariosFixed}.
\begin{table}[h!]
	\begin{center}
		\begin{tabular}{|c|c|c|c|c|c|}
			\hline
			Scenario & Size & SNR & $d$ & $\ma{F}^{(r)}$ Correlation & Trials \\
			\hline
		   IV & $5 \times 5 \times 5$ & 50 dB & 3 & (none) & $10,000$\\
			V & $3 \times 15 \times 70$ & 40 dB & 3 & (none) & $5,000$\\
			\hline
		\end{tabular}
		\caption{ Scenarios for fixed SNR simulations.}
		\label{tab:SECSI_ScenariosFixed}
	\end{center}
\end{table}

In these simulation setups, the noise tensor $\ten{N}$ as well as the noiseless tensor $\ten{X}_0$ are randomly generated at every trial. \textcolor{c55}{Since the noise variance has no effect on \textcolor{c66}{the selected estimate}, for the real-valued uncorrelated noise with equal variance scenario, we use $\ma{R}_{\mathrm{nn}}^{(r)} = \ma{I}_{M}$ and $\ma{C}_{\mathrm{nn}}^{(r)} = \ma{I}_{M}$, for all $r=1,2,3$, as input for this PAS scheme, regardless of the value of $\sigma_N^2$}. Moreover, to provide \textcolor{c55}{more} insights about the performance, we also define a naive selection scheme denoted as DUMMYR, where the final estimates are randomly selected among the 6 possible estimates available per factor matrix. We use the complementary cumulative density function (CCDF) of the TrMSFE to \textcolor{c55}{illustrate} the robustness of the \textcolor{c55}{selected strategy} \cite{SECSI}. \looseness=-1
\begin{figure*}[t!]
	\centering
	\begin{subfigure}[t]{0.45\textwidth}
		\raggedleft
		%\resizebox{0.95\columnwidth}{!}{\input{figures/DC_Comparison.tex}}
		\includegraphics[width=.95\linewidth]{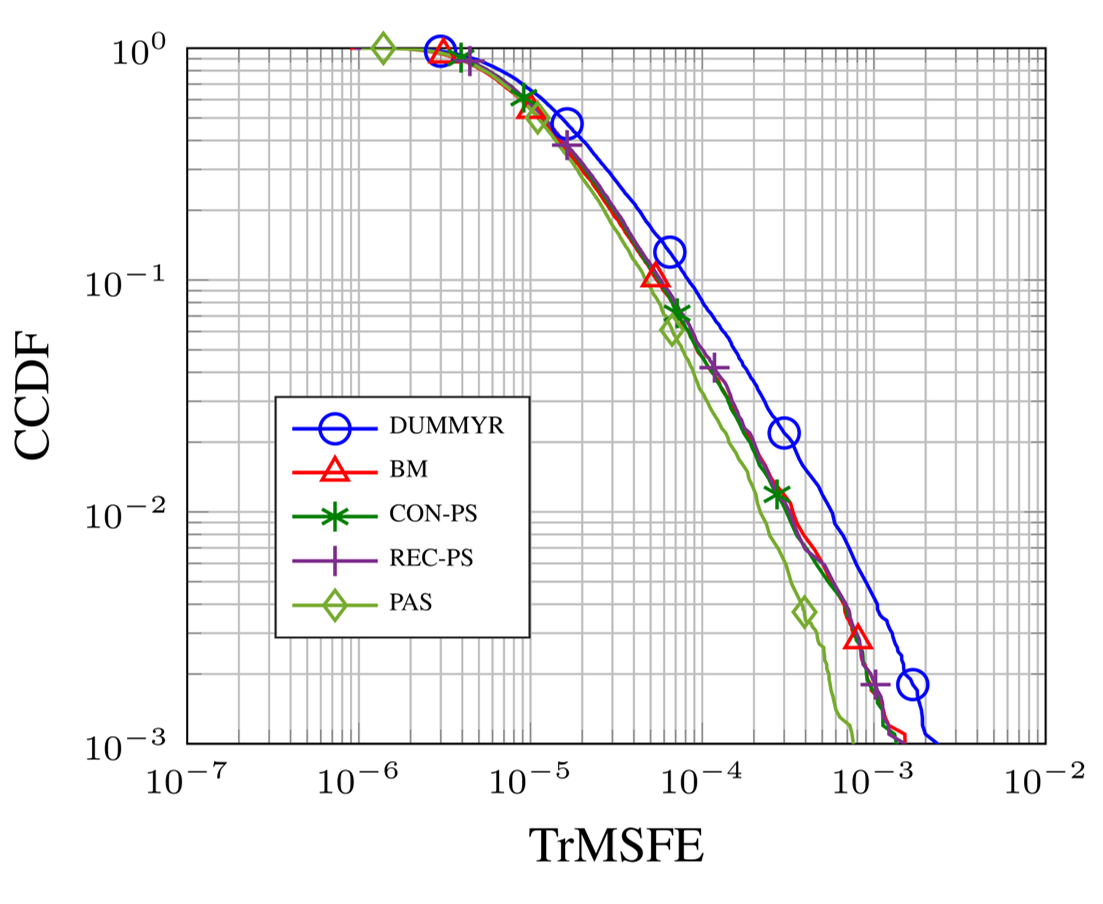}
		\caption{Scenario IV}
		\label{fig:CCDF_2}
	\end{subfigure}
	\begin{subfigure}[t]{0.45\textwidth}
		\raggedleft
		%\resizebox{0.95\columnwidth}{!}{\input{figures/DC_Comparison.tex}}
		\includegraphics[width=.95\linewidth]{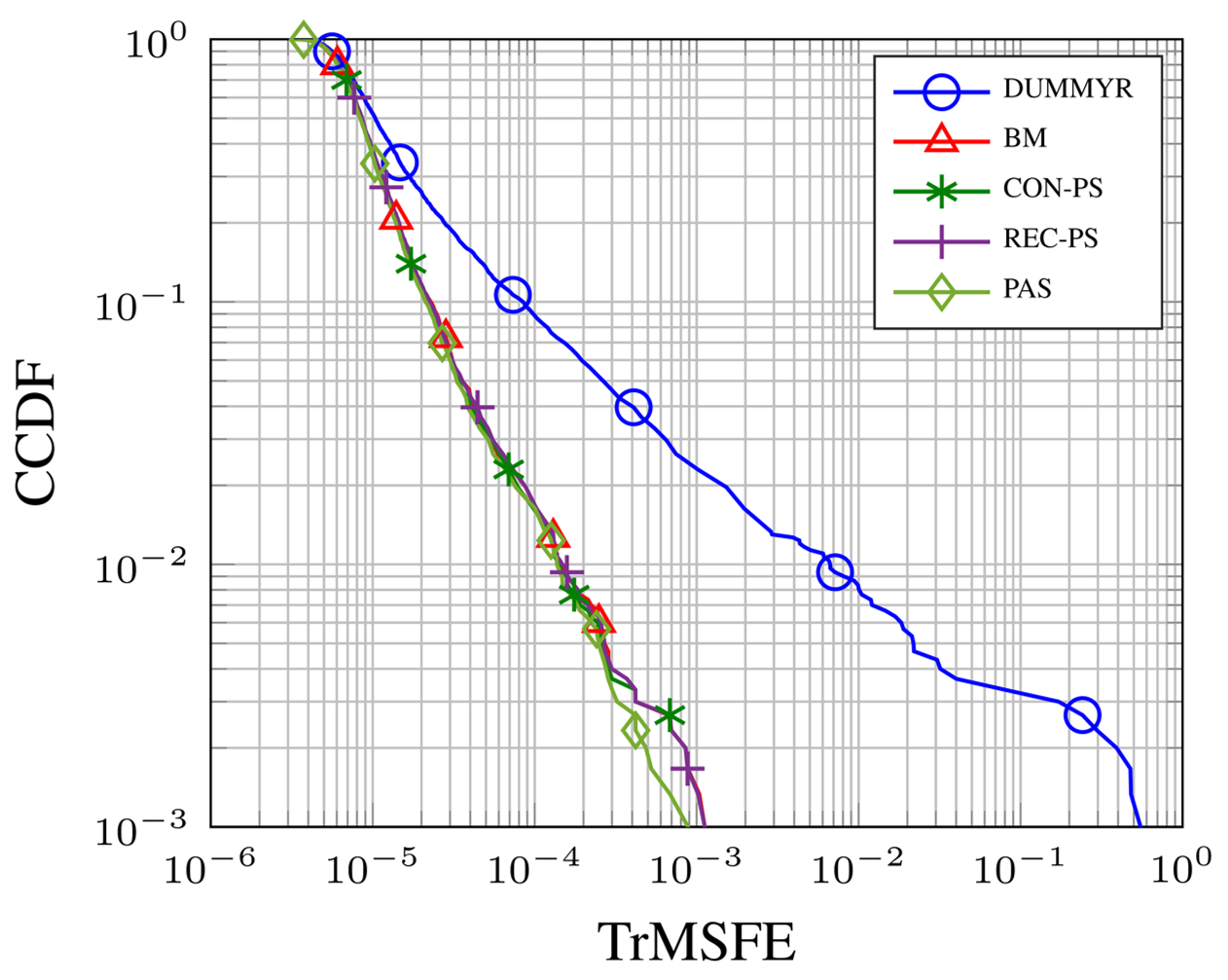}
		\caption{Scenario V}
		\label{fig:CCDF_4}
	\end{subfigure}
	\caption{CCDF of the TrMSFE for Scenarios IV and V.}\label{fig:CCDF}

\end{figure*}

The results are shown in Fig. \ref{fig:CCDF} for two scenarios with  different tensor sizes. The results show that the proposed PAS scheme outperforms the other schemes especially in scenario IV where the tensor size is small and \textcolor{c55}{the} SNR is also high. This is to be expected, since the expressions used to build the PAS scheme are based on the high SNR assumption. Nevertheless, we also observe that even with the higher dimensional tensor and relatively low SNR, as shown in Fig. \ref{fig:CCDF}(b), we observe that the PAS scheme performs slightly better than the CON-PS, REC-PS, and BM schemes.
%
%\begin{figure*}[t]
%	\centering
%%	\begin{subfigure}[t]{0.48\textwidth}
%%		\raggedright
%%		%\resizebox{0.95\columnwidth}{!}{\input{figures/DC_Comparison.tex}}
%%		\input{figures/CCDF_TrMSFE_Scenario1.tex}
%%		\caption{Scenario VI}
%%		\label{fig:CCDF_1}
%%	\end{subfigure}
%%	~
%	\begin{subfigure}[t]{0.48\textwidth}
%		\raggedleft
%		%\resizebox{0.95\columnwidth}{!}{\input{figures/DC_Comparison.tex}}
%		\input{figures/CCDF_TrMSFE_Scenario2.tex}
%		\caption{Scenario III}
%		\label{fig:CCDF_2}
%	\end{subfigure}
%%	
%%	\begin{subfigure}[b]{0.48\textwidth}
%%		\raggedright
%%		%\resizebox{0.95\columnwidth}{!}{\input{figures/DC_Comparison.tex}}
%%		\input{figures/CCDF_TrMSFE_Scenario3.tex}
%%		\caption{Scenario VIII}
%%		\label{fig:CCDF_3}
%%	\end{subfigure}
%%	~
%	\begin{subfigure}[b]{0.48\textwidth}
%		\raggedleft
%		%\resizebox{0.95\columnwidth}{!}{\input{figures/DC_Comparison.tex}}
%		\input{figures/CCDF_TrMSFE_Scenario4.tex}
%		\caption{Scenario IV}
%		\label{fig:CCDF_4}
%	\end{subfigure}
%	\caption{CCDF of the TrMSFE for Scenarios III and IV.}\label{fig:CCDF}
%\end{figure*}
\section{Conclusions}
\label{conc}
\textcolor{c55}{In this work, a first-order perturbation analysis of the SECSI framework for \textcolor{c66}{the approximate CP decomposition of 3-D noise-corrupted low-rank tensors is presented}, where we {provide} {closed-form} expressions for the {relative mean square error} for each of the estimated factor matrices. {The derived expressions are formulated in terms of the second-order moments of the noise, such that apart from a zero mean, no assumptions on the noise statistics are required}. The simulation results depict the excellent match between the \mbox{{closed-form}} expressions and the empirical results for both real and complex valued data. Moreover, these expressions can {also be} used for an enhancement in the factor matrix estimates selection step of the SECSI framework.  The new PAS selection scheme outperforms the existing schemes especially at high SNR values.}
%
%
%%%%%%%%%%%%%%%%%%%%%%%%%%%%%%%%%%%%%%%%%%%%%%%%%%%
%%% References
%%%%%%%%%%%%%%%%%%%%%%%%%%%%%%%%%%%%%%%%%%%%%%%%%%%
%

\appendices
 \section{Proof of Theorem \ref{teo:Popt} }\label{proof:Popt}\textcolor{c88}
{The term $(\ma{Z} + \Delta \ma{Z})\cdot \tilde{\ma{P}}$ can be interpreted as the estimate of $\ma{Z}$ up to a scaling of its column, given by
 \begin{align*}
 \hat{\ma{Z}} =(\ma{Z} + \Delta \ma{Z})\cdot\tilde{\ma{P}} = \tilde{\ma{Z}} +\Delta\tilde{\ma{Z}}.
 \end{align*}
 Therefore, $\tilde{\ma{P}}$ can be expressed as $ \tilde{\ma{P}} =\mathrm{Ddiag}\left( \ma{Z}^{\mathrm{H}} \cdot \tilde{\ma{Z}} \cdot \bm{K}^{-1} \right)$.
 To resolve the scaling ambiguity, we compute a diagonal matrix $\ma{P}_{\mathrm{opt}}$ that satisfies ${\ma{P}_{\mathrm{opt}}}=\mathrm{Ddiag}( \ma{Z}^{\mathrm{H}} \cdot \hat{\ma{Z}}\cdot \bm{K}^{-1} )^{-1}$. This result leads to \looseness=-1
 \begin{align*}
 \tilde{\ma{P}}\cdot \ma{P}_{\mathrm{opt}}&= \tilde{\ma{P}} \cdot\mathrm{Ddiag}\left( \ma{Z}^{\mathrm{H}} \cdot \hat{\ma{Z}} \cdot \bm{K}^{-1} \right)^{-1}\\
 &=\tilde{\ma{P}} \cdot \mathrm{Ddiag}\left( \ma{Z}^{\mathrm{H}}\cdot (\ma{Z} + \Delta \ma{Z})\cdot\tilde{\ma{P}} \cdot \bm{K}^{-1} \right)^{-1} \\
  &= \left(  \ma{I}_d + \mathrm{Ddiag}(\ma{Z}^{\mathrm{H}}\cdot\Delta \ma{Z}) \cdot \bm{K}^{-1} \right)^{-1}
 \end{align*}
Let us define the diagonal matrix $\ma{U} \triangleq \mathrm{Ddiag}\left(\ma{Z}^{\mathrm{H}}\cdot\Delta \ma{Z} \right)\cdot \bm{K}^{-1}$ of size $d \times d$ with diagonal elements $\ma{U}(i,i)$ for all $i=1,2,\dots,d$. Therefore, $\left(\ma{I}_d + \ma{U}\right)^{-1}$ is also a diagonal matrix, with diagonal elements $(1 + \ma{U}(i,i))^{-1}$ for all $i=1,2,\dots,d$. Since the elements $\ma{U}(i,i)$ are first-order terms, we use the well known Taylor \textcolor{c44}{expansion} $(1 + x)^{-1} = 1 - x + \mathcal{O}(\Delta^2) $
which holds for any first-order scalar term $x$. This leads to $\left(\ma{I}_d + \ma{U}\right)^{-1} = \left(\ma{I}_d -\ma{U}\right)  + \mathcal{O}(\Delta^2)$. Therefore, we get $\left(  \ma{I}_d + \mathrm{Ddiag}(\ma{Z}^{\mathrm{H}}\cdot\Delta \ma{Z}) \cdot \bm{K}^{-1}\right)^{-1} = \ma{I}_{d} -  \mathrm{Ddiag}\left( \ma{Z}^{\mathrm{H}} \cdot  \Delta\ma{Z} \right)\cdot \bm{K}^{-1} + \mathcal{O}(\Delta^2)$, and thus
\begin{align*}
\ma{Z}- &(\ma{Z} + \Delta \ma{Z}) \cdot \tilde{\ma{P}} \cdot\ma{P}_{\mathrm{opt}}
= \ma{Z}-(\ma{Z} + \Delta \ma{Z})\cdot\\ &\left[  \ma{I}_{d} -  \mathrm{Ddiag}\left( \ma{Z}^{\mathrm{H}} \cdot  \Delta\ma{Z} \right) \cdot \bm{K}^{-1} \right] +\mathcal{O}(\Delta^2)\\
&= \ma{Z} \cdot \mathrm{Ddiag}\left(\ma{Z}^{\mathrm{H}} \cdot  \Delta\ma{Z} \right)\cdot \bm{K}^{-1} -\Delta\ma{Z} +\mathcal{O}(\Delta^2),
\end{align*}
which proves this theorem.}

{\section*{ACKNOWLEDGMENTS}}
The  authors gratefully acknowledge the financial support by the	German-Israeli foundation (GIF), grant number I-1282-406.10/2014.

\bibliographystyle{IEEEtran}
\bibliography{mybibfile}

% Generated by IEEEtran.bst, version: 1.14 (2015/08/26)
\begin{thebibliography}{10}
\providecommand{\url}[1]{#1}
\csname url@samestyle\endcsname
\providecommand{\newblock}{\relax}
\providecommand{\bibinfo}[2]{#2}
\providecommand{\BIBentrySTDinterwordspacing}{\spaceskip=0pt\relax}
\providecommand{\BIBentryALTinterwordstretchfactor}{4}
\providecommand{\BIBentryALTinterwordspacing}{\spaceskip=\fontdimen2\font plus
\BIBentryALTinterwordstretchfactor\fontdimen3\font minus
  \fontdimen4\font\relax}
\providecommand{\BIBforeignlanguage}[2]{{%
\expandafter\ifx\csname l@#1\endcsname\relax
\typeout{** WARNING: IEEEtran.bst: No hyphenation pattern has been}%
\typeout{** loaded for the language `#1'. Using the pattern for}%
\typeout{** the default language instead.}%
\else
\language=\csname l@#1\endcsname
\fi
#2}}
\providecommand{\BIBdecl}{\relax}
\BIBdecl

\bibitem{CP_app1}
T.~G. Kolda and B.~W. Bader, ``Tensor decompositions and applications,''
  \emph{SIAM Review}, vol.~51, no.~3, pp. 455--500, 2009.

\bibitem{CP_app2}
L.~{D}e Lathauwer, ``Parallel factor analysis by means of simultaneous matrix
  decompositions,'' in \emph{Proc. of 1st IEEE International Workshop on
  Computational Advances in Multi-Sensor Adaptive Processing}, Dec 2005, pp.
  125--128.

\bibitem{CP_app3}
N.~D. Sidiropoulos, G.~B. Giannakis, and R.~Bro, ``Blind {PARAFAC} receivers
  for {DS-CDMA} systems,'' \emph{IEEE Transactions on Signal Processing},
  vol.~48, no.~3, pp. 810--823, Mar 2000.

\bibitem{ALS1}
J.~Carroll and J.~Chang, ``Analysis of individual differences in
  multidimensional scaling via an n-way generalization of “{Eckart-Young}”
  decomposition,'' \emph{Psychometrika}, vol.~35, no.~3, pp. 283--319, 1970.

\bibitem{ALS2}
R.~Harshman, \emph{Foundations of the {PARAFAC} procedure: Models and
  conditions for an" explanatory" multi-modal factor analysis}.\hskip 1em plus
  0.5em minus 0.4em\relax University of California at Los Angeles, 1970.

\bibitem{CPvsSMD}
L.~{D}e Lathauwer, ``A link between the canonical decomposition in multilinear
  algebra and simultaneous matrix diagonalization,'' \emph{SIAM {J}ournal on
  Matrix Analysis and Applications}, vol.~28, no.~3, pp. 642--666, 2006.

\bibitem{SECSI_first}
F.~Roemer and M.~Haardt, ``A closed-form solution for multilinear {PARAFAC}
  decompositions,'' in \emph{Proc. of 2008 5th IEEE Sensor Array and
  Multichannel Signal Processing Workshop}, 2008, pp. 487--491.

\bibitem{SECSI02}
------, ``A closed-form solution for parallel factor {(PARAFAC)} analysis,'' in
  \emph{Proc. of IEEE International Conference on Acoustics, Speech and Signal
  Processing}, March 2008.

\bibitem{SECSI}
\BIBentryALTinterwordspacing
------, ``A semi-algebraic framework for approximate {C}{P} decompositions via
  simultaneous matrix diagonalizations ({SECSI}),'' \emph{Signal Processing},
  vol.~93, no.~9, pp. 2722 -- 2738, 2013. [Online]. Available:
  \url{http://www.sciencedirect.com/science/article/pii/S0165168413000704}
\BIBentrySTDinterwordspacing

\bibitem{SECSI3}
F.~Roemer, C.~Schroeter, and M.~Haardt, ``A semi-algebraic framework for
  approximate {CP} decompositions via joint matrix diagonalization and
  generalized unfoldings,'' in \emph{Proc. of 46th Asilomar Conference on
  Signals, Systems and Computers}.\hskip 1em plus 0.5em minus 0.4em\relax IEEE,
  2012, pp. 2023--2027.

\bibitem{Asilmor_16}
E.~R. Balda, S.~A. Cheema, J.~Steinwandt, M.~Haardt, A.~Weiss, and A.~Yeredor,
  ``First-order perturbation analysis of low-rank tensor approximations based
  on the truncated {HOSVD},'' in \emph{Proc. of 50th Asilomar Conference on
  Signals, Systems and Computers}, Nov 2016.

\bibitem{Icassp_17}
E.~R. Balda, S.~A. Cheema, A.~Weiss, A.~Yeredor, and M.~Haardt, ``Perturbation
  analysis of joint eigenvalue decomposition algorithms,'' in \emph{Proc. of
  IEEE Int. Conference on Acoustics, Speech, and Signal Processing (ICASSP)},
  Mar 2017.

\bibitem{LathauwerPerfAnalysis}
L.~{D}e Lathauwer, B.~{D}e Moor, and J.~Vandewalle, ``A multilinear singular
  value decomposition,'' \emph{{SIAM} {J}ournal on Matrix Analysis and
  Applications}, vol.~21, no.~4, pp. 1253--1278, 2000.

\bibitem{SVD_per}
F.~Li, H.~Liu, and R.~J. Vaccaro, ``Performance analysis for {DOA} estimation
  algorithms: unification, simplification, and observations,'' \emph{IEEE
  Transactions on Aerospace and Electronic Systems}, vol.~29, no.~4, pp.
  1170--1184, Oct 1993.

\bibitem{ShRt}
T.~Fu and X.~Gao, ``Simultaneous diagonalization with similarity transformation
  for non-defective matrices,'' in \emph{Proc. of IEEE Int. Conference on
  Acoustics Speech and Signal Processing Proceedings}, vol.~4, 2006.

\bibitem{JDTM}
X.~Luciani and L.~Albera, ``Joint eigenvalue decomposition using polar matrix
  factorization,'' in \emph{Proc. of International Conference on Latent
  Variable Analysis and Signal Separation}.\hskip 1em plus 0.5em minus
  0.4em\relax Springer, 2010, pp. 555--562.

\bibitem{Coupled}
A.~Remi, X.~Luciani, and E.~Moreau, ``A coupled joint eigenvalue decomposition
  algorithm for canonical polyadic decomposition of tensors,'' in \emph{Proc.
  of IEEE Sensor Array and Multichannel Signal Processing Workshop (SAM)},
  2016.

\end{thebibliography}
%\end{multicols}
\end{document}